\documentclass[11pt]{article}
%%%%%%%%%%%%%%%%%%%%%%%%%%%%%%%%%%%%%%%%%%%%%%%%%%%%%%%%%%%%%%%%%%%
\usepackage{amssymb}

\textwidth21cm \textheight29.7cm \oddsidemargin2.5cm
\evensidemargin\oddsidemargin \topmargin2cm
\addtolength{\textwidth}{-\oddsidemargin}
\addtolength{\textwidth}{-\evensidemargin}
\addtolength{\textheight}{-2\topmargin}
\addtolength{\textheight}{-\headheight}
\addtolength{\textheight}{-\headsep}
\addtolength{\textheight}{-\footskip} \hoffset-1in \voffset-1in
\newtheorem{theorem}{Theorem}
\newtheorem{lemma}{Lemma}
\newtheorem{proposition}{Proposition}

\newtheorem{corollary}[theorem]{Corollary}
\newtheorem{remark}{Remark}
\newenvironment{proof}[1][Proof]{\textbf{#1.} }{\ \rule{0.5em}{0.5em} \smallskip}

\begin{document}

\vspace{2cm}

\begin{center}
{\LARGE Towards the saturation of the Froissart bound}

\vspace{1cm}

{\large Joachim Kupsch }\footnote{%
e-mail: kupsch@physik.uni-kl.de}

{\large Fachbereich Physik, TU Kaiserslautern\\[0pt]
D-67653 Kaiserslautern, Germany}
\end{center}

\vspace{0.5cm}

{\small It is the aim of this paper to review the constructions of pion-pion
scattering amplitudes that rigorously satisfy Mandelstam
analyticity, crossing symmetry, and (at least partly) the
constraints imposed by elastic and inelastic unitarity. Three types
of amplitudes are considered in detail: amplitudes that are given by
a Mandelstam representation, analytic function defined by an
explicit Regge type ansatz, and amplitudes with Regge poles in the
Khuri or the Watson-Sommerfeld representation. The results are
discussed under particular emphasis of a strong increase of the
absorptive part of the forward amplitude and the saturation of the
Froissart bound. Demanding all constraints the optimal construction
obtained so far yields (via the optical theorem) a total cross
section, which decreases like $(\log E)^{-3}$, where $E$ is the
energy of the scattering process. The increasing cross section of
the Froissart bound has been saturated by amplitudes, which satisfy
analyticity, crossing symmetry and the constraints imposed by
inelastic unitarity; but elastic unitarity is missing. The problems
caused by elastic unitarity are discussed in detail.}

%\vspace{0.3cm} {\small PACS: 11.55.-m, 11.55.Jy }

\vspace{0.2cm}

\section{Introduction}

One of the outstanding results of the analytic S-matrix theory is the
Froissart bound%
\begin{equation}
\sigma _{tot}(s)\leq const\,\left( \log s\right) ^{2}  \label{i1}
\end{equation}%
for the total cross section of a two particle scattering process, where $s$
is the square of the centre of mass energy. This bound has been derived 1961
by Froissart \cite{Froissart:1961} assuming that the two particle scattering
amplitude has uniformly bounded partial wave amplitudes and satisfies a
Mandelstam representation with a finite number of subtractions. Then Martin
\cite{Martin:1963,Martin:1966} has established this bound using only the
analyticity domain of axiomatic quantum field theory and positivity
properties of the absorptive part. In the meantime experimental results \cite%
{UA4:1993} indicate an increase of the total $pp$ cross section, which is
compatible with a $\left( \log s\right) ^{2}$ behaviour, and future
experiments at BNL-RHIC and CERN-LHC may confirm this increase \cite%
{Bourrely:2003}. The derivation of the Froissart bound (\ref{i1}) follows
from only a part of the analyticity and unitarity properties, which can be
formulated with the elastic two particle scattering amplitude; and it is
still an open problem, whether the Froissart bound can be improved, if all
these constraints are taken into account. For the scattering process with
the strongest crossing restrictions -- the scattering of neutral (or
isospin-1) pions -- the existence of amplitudes, which satisfy elastic
unitarity and the unitarity inequalities in the inelastic regime, has been
derived in 1968 by Atkinson \cite{Atkinson:1968a,Atkinson:1968b}. Thereby
elastic unitarity is incorporated by a non-linear fixed point mapping. The
final construction using a Mandelstam representation with one subtraction
has lead to an amplitude which allows an asymptotic behaviour
\begin{equation}
\sigma _{tot}(s)\sim \left( \log s\right) ^{-3}\;\mathrm{for}\;s\rightarrow
\infty  \label{i2}
\end{equation}%
of the total cross section \cite{Atkinson:1970}. So far there is no proof of
the existence of an amplitude, which comes closer to the Froissart bound, if
all constraints are rigorously fulfilled. But if one does not demand elastic
unitarity, amplitudes have been constructed, which saturate the Froissart
bound \cite{Kupsch:1982}.

In the main part of this paper we give a review about the construction of
amplitudes that rigorously satisfy Mandelstam analyticity, the crossing
symmetry of neutral pions and -- at least to some extend -- the unitarity
constraints, which can be formulated with the elastic scattering amplitude
alone. The discussion concentrates on scattering amplitudes with an optimal
increase of the absorptive part of the forward amplitude at high energies.
In addition to that we investigate the problems, which originate from
elastic unitarity, if one wants to obtain a better result than (\ref{i2}).
All results can be easily extended to the scattering of isospin-1 pions.

The paper is organized as follows. The notations and the assumptions about
analyticity, crossing and unitarity are given in section \ref{not}. In
section \ref{bounds} we recapitulate a few results on upper bounds, which
are important for the subsequent sections. Then we discuss in some detail
three types of amplitudes, which have been investigated in the literature:

\begin{enumerate}
\item[(i)] Amplitudes represented by the Mandelstam spectral integrals%
\newline
A non-linear fixed-point mapping has been developed to incorporate elastic
unitarity for amplitudes of this type \cite%
{Atkinson:1968a,Atkinson:1968b,Atkinson:1969,Atkinson:1970,Kupsch:1969b}. If
the Mandelstam representation has at most one subtraction and the spectral
functions are positive, it is possible to incorporate also the inelastic
unitarity constraints. The result (\ref{i2}) has been obtained for an
amplitude with one subtraction. The construction of these amplitudes and the
limitation to (\ref{i2}) are presented in section \ref{Mandelstam}.

\item[(ii)] Analytic functions with a Regge type asymptotics\newline
Starting with an explicit ansatz one can obtain analytic functions with a
Regge type asymptotics that satisfy Mandelstam analyticity, crossing
symmetry and the constraints of inelastic unitarity. The first solutions
have been obtained for amplitudes with simple Regge poles or double poles
\cite{Kupsch:1971}. Then the total cross section behaves like%
\begin{equation}
\sigma _{tot}(s)\sim s^{\alpha (0)-1}\left( \log s\right) ^{n}\;\mathrm{for}%
\;s\rightarrow \infty  \label{i3}
\end{equation}%
with an intercept $\alpha (0)\leq 1$ of the leading Regge trajectory and an
exponent $n=0$ or $1$ of the logarithm. The validity of the constraints of
inelastic unitarity is derived using a linearization of the quadratic
unitarity inequalities. This technique of linearization is recapitulated in
section \ref{lin} and the construction of the amplitudes with Regge poles is
presented in section \ref{R-inel}. \newline
These methods have been extended in \cite{Kupsch:1982} to amplitudes, which
have crossing Regge cuts and which saturate the Froissart bound. The main
step for this construction is indicated in section \ref{Froissart}.
\end{enumerate}

The amplitudes of section \ref{Regge} do not satisfy elastic unitarity.
Already in 1960 Gribov \cite{Gribov:1960,Gribov:1961} realized that
Mandelstam analyticity, crossing symmetry and elastic unitarity impose
irritating constraints on the high energy behaviour of scattering
amplitudes. The construction -- or the mere proof of existence -- of Regge
amplitudes, which satisfy crossing symmetry and elastic unitarity is
therefore a non-trivial task. For that purpose we consider in section \ref%
{R-el}

\begin{enumerate}
\item[(iii)] Amplitudes in the Khuri representation\newline
To impose elastic unitarity one can generalize the non-linear fixed point
mapping to Regge amplitudes using the Watson-Sommerfeld transform \cite%
{AFJK:1976} or the Khuri representation \cite{Kupsch:1977}. It is possible
to obtain solutions with one rising Regge trajectory that satisfy crossing
symmetry and elastic unitarity and that have an asymptotically constant
total cross-section, see section \ref{R-fixedpoint}. The partial wave
amplitudes of these solutions are uniformly bounded for all energies, but
the validity of the additional constraints of inelastic unitarity remains
questionable. In section \ref{R-unitary} we discuss the present status of
this problem.
\end{enumerate}

As already mentioned the optimal result, which has been obtained assuming
all crossing and unitarity constraints, is (\ref{i2}). Extending the methods
presented in sections \ref{R-inel} and \ref{R-fixedpoint} one might succeed
satisfying all constraints to construct an amplitude with an asymptotically
constant total cross section. But to obtain solutions with increasing total
cross sections one has to develop new techniques. The puzzle of the
Froissart bound is not yet settled.

The problem of the saturation of the Froissart bound has also been
investigated assuming crossing symmetry in the smaller analyticity domain of
the axiomatic quantum field theory \cite{Khuri:1976}. Unfortunately these
methods have not lead to a conclusive answer and will not be discussed in
this paper.

The details of the constructions of the amplitudes can be found in the cited
literature. But for completeness some calculations are given in the
Appendices \ref{PW} -- \ref{Khuri}.

\section{Notations and basic assumptions\label{not}}

The kinematics is always the kinematics of pion scattering with the
Mandelstam variables $s,\,t$ and $u$. The mass of the particles is
normalized to unity such that $s+t+u=4$. We consider the construction of
amplitudes with the following properties:

\begin{enumerate}
\item[1.] The amplitude $A(s,t)$ is holomorphic in the Mandelstam domain%
\begin{equation}
\mathbb{C}_{cut}^{2}=\left\{ (s,t)\in \mathbb{C}^{2}\mid s\notin \left[
4,\infty \right) ,\,t\notin \left[ 4,\infty \right) ,\,s+t\notin \left(
-\infty ,0\right] \right\} .  \label{int1}
\end{equation}

\item[2.] The amplitude $A(s,t)$ is polynomially bounded
\begin{equation}
\left\vert A(s,t)\right\vert \leq const\,\left( 1+\left\vert s\right\vert
+\left\vert t\right\vert \right) ^{n}  \label{int2}
\end{equation}%
for $(s,t)\in \mathbb{C}_{cut}^{2}$ with some positive number $n$, and the
boundary values of $A(s,t)$ are H\"{o}lder continuous with an index $\mu \in
\left( 0,\frac{1}{2}\right] $.

\item[3.] If not stated otherwise, we consider neutral pions, and crossing
symmetry means: \newline
The amplitude is symmetric in the variables $s,\,t$ and $u$%
\begin{equation}
A(s,t)=A(t,s)=A(s,u)=A(u,s)=A(u,t)=A(t,u).  \label{int3}
\end{equation}
\end{enumerate}

A more precise form of Mandelstam analyticity and crossing symmetry is the
following. Let $F(s,t)$ be a polynomially bounded function which is analytic
in the domain%
\begin{equation}
\mathbb{D}=\left\{ (s,t)\in \mathbb{C}^{2}\mid s\notin \left[ 4,\infty
\right) ,\,t\notin \left[ 16,\infty \right) \right\} ,  \label{int11}
\end{equation}%
then $A(s,t)=Sym\,F(s,t)$ with%
\begin{equation}
Sym\,F(s,t):=F(s,t)+F(t,s)+F(s,u)+F(u,s)+F(u,t)+F(t,u)  \label{int12}
\end{equation}%
is an amplitude, which satisfies Mandelstam analyticity -- as introduced in
\cite{Mandelstam:1958} -- and crossing symmetry. The amplitudes constructed
in the following sections have this form. But an extension to the isospin-1
crossing symmetry of charged pions is easily possible, as can be seen from
the cited literature.

The absorptive part in the $s$-channel is
\[
A_{s}(s,t)\equiv \mathrm{Abs}_{s}A(s,t):=(2i)^{-1}\left(
A(s+i0,t)-A(s-i0,t)\right) ,\quad \,s\geq 4;
\]%
it agrees with the imaginary part $\mathrm{Im}\,A(s+i0,t)$ if $-s<t<4$. The
amplitude is normalized such that the total cross section is given by%
\begin{equation}
\sigma _{total}(s)=\frac{8\pi }{\sqrt{s(s-4)}}\,A_{s}(s,0)=\frac{8\pi }{%
\sqrt{s(s-4)}}\,\mathrm{Im}\,A(s+i0,0)  \label{int4}
\end{equation}%
The partial wave expansion in the physical domain is%
\begin{equation}
A(s+i0,t)=2\sqrt{\frac{s}{s-4}}\sum_{l}(2l+1)a_{l}(s)P_{l}(z),\,s\geq 4
\label{int5}
\end{equation}%
with $z=1+\frac{2t}{s-4}$. The partial waves are given by%
\begin{equation}
a_{l}(s)=\frac{1}{2}\left[ s(s-4)\right] ^{-\frac{1}{2}%
}\int_{4-s}^{0}A(s+i0,t)\,P_{l}(z)dt,\,l=0,1,2,...  \label{int6}
\end{equation}%
For $l>n$ this integral is equivalent to the Froissart-Gribov integral%
\begin{equation}
a_{l}(s)=\frac{2}{\pi }\left[ s(s-4)\right] ^{-\frac{1}{2}}\int_{4}^{\infty
}A_{t}(s+i0,t)Q_{l}\left( 1+\frac{2t}{s-4}\right) dt.  \label{int7}
\end{equation}%
Due to the crossing symmetry (\ref{int3}) all odd partial wave amplitudes
vanish.

The amplitudes should satisfy at least partly the following unitarity
constraints, which can be formulated with the two-particle scattering
amplitude alone:

\begin{enumerate}
\item[4.] The elastic unitarity identities%
\begin{equation}
\mathrm{Im}\,a_{l}(s)=\left\vert a_{l}(s)\right\vert ^{2}\;\mathrm{for}%
\;l=0,1,2,...,\;\mathrm{and}\;4\leq s\leq 16.  \label{int8}
\end{equation}

\item[5.] The inelastic unitarity inequalities
\begin{equation}
1\geq \mathrm{Im}\,a_{l}(s)\geq \left\vert a_{l}(s)\right\vert ^{2}\;\mathrm{%
for}\;l=0,1,2,....  \label{int9}
\end{equation}%
and energies $s\geq 16$. In the sequel we also consider amplitudes, which
fulfill the inequalities (\ref{int9}) for all $s\geq 4$, but elastic
unitarity does not hold.
\end{enumerate}

The threshold behaviour of amplitudes, which satisfy the inequalities (\ref%
{int9}) but not elastic unitarity, can be rather arbitrary. The
constructions presented in this paper satisfy the following uniform
estimate, which is compatible with elastic unitarity:

\begin{enumerate}
\item[6.] The absorptive part in the $s$-channel has the bound%
\begin{equation}
\left\vert A_{s}(s,t)\right\vert \leq const\sqrt{\frac{s-4}{s}}\left(
1+\left\vert s\right\vert +\left\vert t\right\vert \right) ^{n}
\label{int10}
\end{equation}%
for $s\geq 4$ and $t\in \mathbb{C}_{cut}=\mathbb{C}\backslash \left( -\infty
,-s\right] \cup \left[ 4,\infty \right) $ including the boundary values at $%
t\pm i0$ if $t\geq 4$ or $t\leq -s$.
\end{enumerate}

Finally we give some mathematical notations used in this paper. A number $%
s\in \mathbb{R}$ is called \textit{positive} if $s\geq 0$ and \textit{%
strictly positive} if $s>0$. For a real variable $s$ and a complex parameter
$\lambda $ the function $\mathbb{R}\ni s\rightarrow s_{+}^{\lambda }\in
\mathbb{C}$ is defined as $s_{+}^{\lambda }=0$ if $s\leq 0$ and $%
s_{+}^{\lambda }=s^{\lambda }$ if $s>0$. The hat in $\hat{F}(s,t)$ indicates
that the function is symmetric with respect to its variables, $\hat{F}(s,t)=%
\hat{F}(t,s)$.

\section{Restrictions for the total cross section\label{bounds}}

In this section we shortly recapitulate bounds and constraints of the
absorptive part of the forward amplitude. From general principles follows
that the boundary values $A(s+i0,t),\,s\geq 4$, are integrable functions,
which have no meaning pointwise. Bounds like the Froissart bound are
therefore bounds for local averages, see e. g. section 17.1 of \cite%
{BLOT:1990}. But for amplitudes with uniformly continuous boundary values --
as considered in this paper -- the local statements can be used.

\subsection{The Froissart bound\label{F}}

The bound
\begin{equation}
\mathrm{Im}\,A(s+i0,0)\leq const\,s\left( \log s\right) ^{2},\;s\geq 4,
\label{b1}
\end{equation}%
has first been derived by Froissart \cite{Froissart:1961} assuming the
Mandelstam representation with a finite number of subtraction and bounded
partial waves. Then Martin \cite{Martin:1963,Martin:1966,Martin:1969} has
succeeded to derive this bound from the analyticity domain of axiomatic
quantum field theory and the unitarity constraints%
\begin{equation}
0\leq \mathrm{Im}\,a_{l}(s)\leq 1,\;l=0,1,2,...,\;s\geq 4.  \label{b2}
\end{equation}%
The Froissart bound is therefore a consequence of only a small part of the
analyticity and unitarity constraints of a two-particle scattering
amplitude. Especially, it is valid without assumptions about the crossed
channels. As already mentioned, it is possible to saturate this bound with
an amplitude, which satisfies the constraints 1. -- 3. and 5. of section \ref%
{not}. But elastic unitarity is missing in this construction; see section %
\ref{Froissart}.

\subsection{A bound for amplitudes with positive spectral functions\label%
{bpos}}

For the construction -- or for the mere proof of existence -- of amplitudes,
which satisfy elastic unitarity, analytic functions with positive spectral
functions turn out to be of exceptional importance; see the detailed
discussion in section \ref{dsf}. It is therefore of some interest to know
that the positivity of spectral functions leads to a more restrictive bound
of the amplitudes. The following bound has been derived by Goebel \cite%
{Goebel:1961} and by Martin \cite{Martin:1969a}: \newline
If $A(s,t)$ satisfies Mandelstam analyticity and has positive double
spectral functions, then the inelastic unitarity inequalities (\ref{b2})
imply the bound
\begin{equation}
0\leq \mathrm{Im}\,A(s+i0,0)\leq const\,s(\log s)^{-1},\;s\geq 4.  \label{b3}
\end{equation}%
Using additional consequences of the inequalities (\ref{int9}) Martin has
also deduced that these amplitudes satisfy a Mandelstam representation of at
most two subtractions.

Hence increasing total cross sections are excluded for unitary amplitudes
with positive double spectral functions. The bound (\ref{b3}) is an optimal
bound in the following respect. It can be saturated by crossing symmetric
amplitudes, which have positive double spectral functions and satisfy the
inelastic unitarity inequalities (\ref{int9}); see section 2.3 of \cite%
{Kupsch:1971}.

\subsection{Gribov's Theorem\label{G}}

In 1960 Gribov \cite{Gribov:1960,Gribov:1961} derived a consequence of
crossing symmetry and elastic unitarity without any assumption about the
partial waves at high energy. Let $A(s,t)$ be an amplitude, which satisfies
Mandelstam analyticity. Then elastic unitarity in the $t$-channel does not
allow a linear increase of the absorptive part like
\begin{equation}
A_{s}(s,t)\simeq s\,f(t)\quad \mathrm{if}\;s\rightarrow \infty  \label{b5}
\end{equation}%
for $t$ in some interval $t_{1}<t<t_{2}$ with $t_{1}<0$ and $t_{2}>4$.
Thereby $f(t)$ is a real analytic function with a cut starting at the
elastic threshold $t=4$. This statement is sometimes called \textit{Gribov's
Theorem}. With the same reasoning one can exclude an asymptotic behaviour $%
A_{s}(s,t)\simeq s^{p}\left( \log s\right) ^{q}\,f(t)$ if $s\rightarrow
\infty $ with any $p\in \mathbb{R}$ and $q\geq -1$, whereas values of $q<-1$
do not lead to a contradiction \cite{Froissart:1963}. Since the partial
waves remain bounded only if $p<1$ (and $q\in \mathbb{R}$) or $p=1$ and $%
q\leq 0$, Gribov conjectured that the total cross section has to decrease
for high energies.

In the language of Regge theory the result of Gribov means that the
asymptotics cannot be dominated by a fixed pole at real angular momentum.
One can circumvent the inconsistencies seen by Gribov in using rising
complex Regge trajectories, see section \ref{R-el}. But there remain
problems with the crossed channel contributions of the elastic unitarity
integral, if one tries to combine elastic unitarity with the inelastic
constraints (\ref{int9}); see section \ref{R-unitary}.

It should be stressed that Gribov's arguments are based on elastic unitarity
and crossing. If one demands crossing symmetry and the inelastic unitarity
inequalities (i. e. the constraints 1.-- 3. and 5. of section \ref{not}),
but not elastic unitarity, it is possible to find amplitudes which satisfy
these constraints and which have the high energy behaviour (\ref{b5}). The
construction of such amplitudes is shortly discussed in section \ref{Gribov}.

\section{Constructions using the Mandelstam representation\label{Mandelstam}}

In this section we recapitulate the construction of amplitudes which satisfy
crossing symmetry, elastic unitarity, and (partly) the inelastic unitarity
constraints (\ref{int9}) using the Mandelstam representation explicitly. For
these amplitudes elastic unitarity is obtained using a non-linear fixed
point mapping for the spectral functions. This method has first been
established by Atkinson for amplitudes which satisfy a Mandelstam
representation without subtraction \cite{Atkinson:1968a,Atkinson:1968b};
then it has been extended to amplitudes with one subtraction \cite%
{Atkinson:1970,Kupsch:1969b} and to amplitudes with an arbitrary (finite)
number of subtractions \cite{Atkinson:1969}.

The results for amplitudes with positive spectral functions are
recapitulated in section \ref{dsf}. The non-linear fixed point mapping for
elastic unitarity is presented in section \ref{fixedpoint}. The details are
given only for amplitudes without subtraction. Amplitudes with an arbitrary
number of subtractions are shortly discussed in section \ref{subt}.

\subsection{Amplitudes with positive spectral functions \label{dsf}}

Until now the proof of the existence of amplitudes, which satisfy all the
requirements of Mandelstam analyticity, crossing symmetry, exact elastic
unitarity, and the inelastic inequalities (\ref{int9}), has been given only
for amplitudes, which satisfy a Mandelstam representation with at most one
subtraction and which have positive spectral functions. The first proof has
been given by Atkinson for amplitudes, which satisfy an unsubtracted
Mandelstam representation $A(s,t)=Sym\,F(s,t)$ with
\begin{equation}
F(s,t)=\Phi _{0}\left[ \psi \right] (s,t):=\frac{1}{\pi ^{2}}%
\int_{4}^{\infty }\int_{16}^{\infty }\psi (x,y)\frac{1}{(x-s)(y-t)}dxdy,
\label{a1}
\end{equation}%
where $\psi (x,y)$ is a positive H\"{o}lder continuous function with support
in $\left[ 4,\infty \right) \times \left[ 16,\infty \right) $. The proof is
based on the construction of a non-linear fixed point equation for the
double spectral function such that any fixed point solution of this mapping
satisfies exactly elastic unitarity. The existence of such a fixed point is
established by the Leray-Schauder fixed point theorem \cite{Atkinson:1968a},
or by the contraction mapping theorem \cite{Atkinson:1968b}. The validity of
the inelastic constraints (\ref{int9}) follows from estimates of the partial
waves of this fixed point solution. In \cite{Atkinson:1968b} the
generalization to charged pions is considered, and in \cite{Atkinson:1969b}
solutions with additional CDD ambiguities are constructed. For all these
solutions the imaginary part of the forward amplitude vanishes like $\mathrm{%
Im}A(s+i0,0)\lesssim s^{\alpha }(\log s)^{-\beta }$ with $-1<\alpha <0$ and $%
\beta <-1$.

In the next step a class of amplitudes $A(s,t)=Sym\,F(s,t)$ has been
constructed which satisfy the once subtracted Mandelstam representation $%
F(s,t)=\Phi _{1}\left[ \varphi ,\psi \right] (s,t)$ with%
\begin{equation}
\Phi _{1}\left[ \varphi ,\psi \right] (s,t):=\frac{1}{2\pi }\int_{4}^{\infty
}\varphi (x)\frac{1}{x-s}dx+\frac{s+t}{\pi ^{2}}\int_{4}^{\infty
}\int_{16}^{\infty }\frac{\psi (x,y)}{x+y}\frac{1}{(x-s)(y-t)}dxdy
\label{a2}
\end{equation}%
where $\varphi (x)$ and $\psi (x,y)$ are positive spectral functions.
Elastic unitarity is again established by a fixed point equation for the
spectral functions. The inelastic constraints (\ref{int9}) are obtained by
explicit estimates for the partial wave amplitudes. The calculations for
these estimates are indicated in Appendix \ref{PW}. The imaginary part of
the forward amplitude can grow like%
\begin{equation}
\mathrm{Im}\,A(s+i0,0)\sim s^{\alpha }(\log s)^{-\delta }\quad \mathrm{if}%
\;s\rightarrow \infty  \label{a3}
\end{equation}%
with $0<\alpha <1$ and $\delta >1,$ or $\alpha =1$ and $\delta \geq 3$. The
amplitudes with $0<\alpha <1$ have been constructed in \cite{Kupsch:1969b}
for $\delta =2$; but solutions for any $\delta >1$ exist. The solutions with
$\alpha =1$ have been obtained by Atkinson \cite{Atkinson:1970}\footnote{%
The relation (1.1) in \cite{Atkinson:1970} should read $\sigma (s)\approx
(\log s)^{-3-\varepsilon }$ with $\varepsilon \geq 0$.}. The limitation to $%
\delta \geq 3$ is discussed below.

The non-linear mapping for elastic unitarity will be considered in more
detail in section \ref{fixedpoint}. Here we would like to add some comments
about the optimal increase of the amplitudes. Knowing the bound (\ref{b3})
and the arguments of Gribov \cite{Gribov:1960,Gribov:1961} one might expect
that solutions exist, which satisfy elastic unitarity together with the
inelastic bounds and increase like $\mathrm{Im}A(s+i0,0)\sim s(\log
s)^{-\delta }$ where $\delta $ can be a number just above $1$. The reason
for the restriction to $\delta \geq 3$ needs therefore some explanation. In
\cite{Atkinson:1970} the function $\psi (s,t)\geq 0$ is chosen such that%
\begin{equation}
\psi (s,t)\sim \frac{t}{s}(\log t)^{-\delta }(\log s)^{-\lambda }  \label{a6}
\end{equation}%
with$\,\delta >1$ and $\lambda >1$. This ansatz is motivated by Martin's
paper \cite{Martin:1969a} on amplitudes with positive double spectral
functions. Assuming a smooth behaviour of the spectral functions (H\"{o}lder
continuity) the absorptive part in the $t$-channel is bounded by $\left\vert
A_{t}(s,t)\right\vert \leq const\cdot t(\log t)^{-\delta }$ for $%
t\rightarrow \infty $, and the bound (\ref{a6}) is stable under the
iteration for elastic unitarity, see \cite{Atkinson:1970}. The imaginary
parts of the partial waves behave like $\mathrm{Im}\,a_{l}(s)\sim (\log
s)^{-\delta +1}$ and the real parts are bounded by $(\log s)^{-\delta +2}$
for high energies; see Appendix \ref{PW2}. With these estimates the
inelastic inequalities can be satisfied only if $\delta \geq 3$; and $%
\mathrm{Im}A(s,0)\sim s(\log s)^{-3}$ is the strongest increase at high
energies that can be derived with these constructions.

The weak part of the arguments leading to this conclusion is the estimate
for the real parts of the partial waves. If we only demand the inelastic
constraints (\ref{int9}) for $s\geq 4$, then crossing symmetric amplitudes
with positive double spectral function have been constructed with an
improved estimate of the partial waves, $\left\vert a_{l}(s)\right\vert
\lesssim c\cdot (\log s)^{-\delta +1}$, where any $\delta \geq 1$ is
admitted. Thereby the constraints 1. -- 3. and 5. of section \ref{not} can
be satisfied such that the bound (\ref{b3}) of the forward amplitude $%
\mathrm{Im}\,A(s+i0,0)\sim s(\log s)^{-1}$ is saturated, see section 2.3 of
Ref. \cite{Kupsch:1971}. To obtain the improved estimates for $\mathrm{Re}%
\,a_{l}(s)$ cancellations between the $s$-channel and the $u$-channel
contributions to $A_{t}(s,t)$ have to be taken into account. That is
possible with an explicit ansatz (as done in \cite{Kupsch:1971}). But if
elastic unitarity is incorporated, the norm estimates used for the fixed
point mapping are not so precise to extract these cancellations.
Nevertheless there remains a chance that the unitarity mapping can be
treated with modified norms to reach values $\delta <3$. But the restriction
to $\delta >1$ will remain as a consequence of the bound (\ref{b3}) and of
Gribov's arguments.

\begin{remark}
One can allow small negative contributions to the spectral functions. But
the estimates have to be dominated by the positive parts. Especially, all
spectral functions have to be positive near the boundary of their support.
For the double spectral functions that statement is true independent of any
specific construction: Mahoux and Martin \cite{Mahoux/Martin:1964} have
derived domains where the double spectral functions have to be positive
assuming elastic unitarity, crossing symmetry and the positivity of $\mathrm{%
Im}\,a_{l}(s)$ for $s\geq 4$ and $l=0,1,2,..$.
\end{remark}

\subsection{The fixed point mapping for elastic unitarity\label{fixedpoint}}

In the case of Mandelstam analyticity elastic unitarity implies the
following identity for the double spectral function \cite{Mandelstam:1958}%
\begin{equation}
\rho (s,t)=\int_{4}^{\infty }dt_{1}\int_{4}^{\infty
}dt_{2}~K(s,t,t_{1},t_{2})A_{t}(s+i0,t_{1})A_{t}(s-i0,t_{2}),  \label{el-4}
\end{equation}%
valid in the domain $4\leq s\leq 16$ and $t\geq 4$. The function $K$ is the
Mandelstam kernel%
\begin{equation}
K(s,t,t_{1},t_{2})=\frac{2}{\pi \sqrt{s(s-4)}}\left(
t^{2}+t_{1}^{2}+t_{2}^{2}-2(tt_{1}+tt_{2}+t_{1}t_{2})-4\frac{tt_{1}t_{2}}{s-4%
}\right) _{+}^{-\frac{1}{2}},  \label{el-5}
\end{equation}%
The support properties of this kernel imply that $\rho (s,t)$ has
non-vanishing contributions only for
\begin{equation}
s>4,\,t>16+\frac{64}{s-4}\quad \mathrm{or}\quad t>16,\,s>4+\frac{64}{t-16}\,.
\label{el-6}
\end{equation}%
For amplitudes (\ref{a1}) without subtraction the identity (\ref{el-4}) is
equivalent to elastic unitarity; in the case of $n$ subtractions one needs
also the identities (\ref{int8}) for the partial wave amplitudes with $0\leq
l<n$ to determine the single spectral functions.

To obtain a well defined fixed point problem the space of double spectral
functions is equipped with a Banach space topology. In most publications
spaces of functions $f(s,t)$, which are H\"{o}lder continuous in both
variables have been used \cite%
{Atkinson:1968a,Atkinson:1968b,Atkinson:1969b,Atkinson:1970,AFJK:1976,Kupsch:1969b}%
. But it is also possible to work with an integral norm \cite%
{Kupsch:1970b,Kupsch:1977}. This integral norm is more adequate for
amplitudes with Regge poles, see section \ref{R-el}.

The space $\mathcal{L}_{\gamma },\gamma >-\frac{1}{2}$, is the Hilbert space
of all complex functions $\mathbb{R}_{+}\ni t\rightarrow f(t)\in \mathbb{C}$
which have a finite norm%
\begin{equation}
\left\Vert f\right\Vert _{\gamma }=\left[ \int_{0}^{\infty }\left\vert
t^{-\gamma }f(t)\right\vert ^{2}\left( 1+\left\vert \log t\right\vert
^{2}\right) \frac{dt}{t}\right] ^{\frac{1}{2}}.  \label{el-7}
\end{equation}%
Functions $f(s,t)$ of two variables $s\geq 4$ and $t\geq 0$ are defined as H%
\"{o}lder continuous mappings $\left[ 4,\infty \right) \ni s\rightarrow
f(s,\,.\,)\in \mathcal{L}_{\gamma }$. More explicitly, a family of Banach
spaces $\mathcal{L}(\gamma ,\delta ),\,\gamma >-\frac{1}{2},\,\delta \in
\mathbb{R}$, is introduced with the norms%
\begin{equation}
\left\Vert f(s,t)\right\Vert _{\gamma ,\delta }=\sup_{s\geq 4,0<h\leq
1}s^{-\delta }\left( \left\Vert f(s,t)\right\Vert _{\gamma }+h^{-\mu
}\left\Vert f(s+h,t)-f(s,t)\right\Vert _{\gamma }\right) .  \label{el-8}
\end{equation}%
Thereby $\mu $ is a H\"{o}lder index from the interval $0<\mu <\frac{1}{2}$.
Let $f(s,t)$ be a function with support in $s\geq 4$ and $t\geq 4$, then the
function $\check{f}(s,t):=f(t,s)$ is obtained by an interchange of the
variables. For spectral functions $\rho \in \mathcal{L}(\gamma ,\delta )$
with $\gamma <0$ and $\delta <0$ the double dispersion integrals (\ref{a2}) $%
\Phi _{0}\left[ \rho \right] $ and $\Phi _{0}\left[ \check{\rho}\right] $
are well defined.

We consider the case without subtraction in some detail. Let $\psi (s,t)\in
\mathcal{L}(\gamma ,\gamma ),\newline
-\frac{1}{2}+\mu <\gamma <0$, be a real double spectral function, which has
a support in the (slightly extended) elastic domain $4\leq s\leq
16+\varepsilon ,\,\varepsilon >0$, and $16\leq t<\infty $, and let $\omega
(s,t)\in \mathcal{L}(\gamma ,\gamma )$ be a real double spectral function,
which has a support only in the inelastic region $s\geq 16$ and $t\geq 16$.
Then the total amplitude is defined as%
\begin{eqnarray}
&&A(s,t)=E(s,t)+B(s,t)\quad \mathrm{with}  \label{el-9} \\
&&E(s,t):=Sym\,\Phi _{0}\left[ \psi \right] (s,t)\quad \mathrm{and}\quad
B(s,t):=Sym\,\Phi _{0}\left[ \omega \right] (s,t).  \label{el-10}
\end{eqnarray}%
The double spectral function of $A$ is $\rho (s,t)=\psi (s,t)+\psi
(t,s)+\omega (s,t)+\omega (t,s)$. As a consequence of the support
restrictions for the spectral functions we have $\rho (s,t)=\psi (s,t)$ if $%
4\leq s\leq 16$. Hence, if the identity
\begin{equation}
\psi (s,t)=\lambda (s)\int_{4}^{\infty }dt_{1}\int_{4}^{\infty
}dt_{2}~K(s,t,t_{1},t_{2})A_{t}(s+i0,t_{1})A_{t}(s-i0,t_{2})  \label{el-11}
\end{equation}%
is true for $s\geq 4$, the amplitude satisfies elastic unitarity. Thereby $%
\lambda (s)$ is a differentiable function with the properties $0\leq \lambda
(s)\leq 1$ if $s\in \mathbb{R}$, and%
\begin{equation}
\lambda (s)=\left\{
\begin{array}{l}
1\;\mathrm{if}\;4\leq s\leq 16, \\
0\;\mathrm{if}\;s\geq 18.%
\end{array}%
\right.  \label{el-12}
\end{equation}%
This function cuts off the support of $\psi (s,t)$ at $s=18$.

The fixed point mapping is now defined for the double spectral function $%
\psi (s,t)$ as follows. We fix a H\"{o}lder index $\mu $ with $0<\mu <\frac{1%
}{2}$ and real parameters $\gamma $ and $\delta $ with the constraints $-%
\frac{1}{2}+\mu <\gamma <\delta <0$. Then a background contribution $B(s,t)$
is specified that has a double spectral function $\omega \in \mathcal{L}%
(\gamma ,\gamma )$ with support in the inelastic region $s\geq 16$ and $%
t\geq 16$. Choosing a double spectral function $\psi (s,t)\in \mathcal{L}%
(\gamma ,\gamma )$ with support in the domain $\left\{ (s,t)\mid 4\leq s\leq
18,\;t\geq 16\right\} $, the absorptive part $A_{t}(s,t)=\mathrm{Abs}%
_{t}A(s,t)$ of the amplitude (\ref{el-9}) is calculated as element of $%
\mathcal{L}(\gamma ,\delta )$%
\begin{eqnarray}
&&A_{t}(s,t)=D\left[ \psi \right] (s,t)+B_{t}(s,t)\quad \mathrm{with}
\label{el-13} \\
&&D\left[ \psi \right] (s,t):=\mathrm{Abs}_{t}E(s,t)=\frac{1}{\pi }%
\int_{4}^{\infty }\left( \psi (x,t)+\psi (t,x)\right) \left( \frac{1}{x-s}+%
\frac{1}{x-4+s+t}\right) dx.  \label{el-14}
\end{eqnarray}%
In (\ref{el-14}) the integral $\pi ^{-1}\int_{4}^{\infty }\psi
(x,t)(x-s)^{-1}dx$ is the $t$-channel absorptive part of the function $\Phi
_{0}\left[ \psi \right] (s,t)$; the other contributions arise from crossing.
Then the unitarity integral (\ref{el-4}) determines an image function
\begin{equation}
\psi ^{\prime }(s,t)=\lambda (s)\int_{4}^{\infty }dt_{1}\int_{4}^{\infty
}dt_{2}~K(s,t,t_{1},t_{2})A_{t}(s+i0,t_{1})A_{t}(s-i0,t_{2})  \label{el-15}
\end{equation}%
which is again an element of $\mathcal{L}(\gamma ,\gamma )$ with a support
in $4\leq s\leq 18$ and $16+64(s-4)^{-1}\leq t<\infty $. The equations (\ref%
{el-13}) -- (\ref{el-15}) define a non-linear mapping $\psi \rightarrow \psi
^{\prime }$ for the real spectral function $\psi \in \mathcal{L}(\gamma
,\gamma )$. This mapping is denoted as $\psi ^{\prime }=\Upsilon \left( \psi
\right) $. Any fixed point of this mapping yields an amplitude (\ref{el-9}),
which satisfies crossing symmetry and elastic unitarity. Moreover, the fixed
point solution is H\"{o}lder continuous in both variables \cite{Kupsch:1970b}
though we have used an integral norm. Therefore the assumption of H\"{o}lder
continuity -- as done in section \ref{not} -- is a natural one. A more
detailed analysis shows that $\Upsilon \left( \psi \right) $ is a
contraction mapping inside some ball $\left\Vert \psi \right\Vert \leq c$
provided the inhomogeneous contribution $\omega $ has a sufficiently small
norm. Hence a unique fixed point solution can be obtained by iteration. If $%
\omega \neq 0$ the trivial solution $\psi =0$ is excluded. Moreover, if $%
\omega (s,t)$ is a positive function with dominant contributions near the
boundary of its support, the estimates of Appendix \ref{PW2} can be used to
derive that the fixed point solution satisfies the inelastic unitarity
constraints (\ref{int9}) for $s\geq 16$.

\begin{remark}
The conditions for the Banach contraction principle (or for the
Leray-Schauder fixed point theorem) are only established for amplitudes with
a sufficiently small norm. Hence this method does not lead to amplitudes
with resonances. But a generalization to amplitudes with CDD ambiguities is
possible \cite{Atkinson:1969b}.
\end{remark}

\subsection{Amplitudes with an arbitrary number of subtractions\label{subt}}

The fixed point mapping $\Upsilon $ can be generalized to a mapping for
amplitudes with an arbitrary finite number of subtractions and for isospin-1
pions, see \cite{Atkinson:1969}. This mapping is defined on the Cartesian
products of the Banach spaces of the independent spectral functions, and it
is possible to satisfy the conditions for a contraction mapping. Hence
amplitudes exist, which satisfy the constraints 1.-- 4. of section \ref{not}
and have a polynomial increase for large $s$ and $t$ of arbitrary strength.
In the case of one subtraction the inelastic unitarity constraints can be
incorporated for all energies as already stated in section \ref{dsf}, but in
the case of more subtractions the partial waves are in general not bounded,
and the forward amplitude may increase polynomially.

\section{Regge type amplitudes that satisfy inelastic unitarity constraints
\label{Regge}}

In this section we review the construction of amplitudes that satisfy the
requirements 1.- 3. and 5. (together with 6.) of section \ref{not} -- only
elastic unitarity is missing. These amplitudes can produce a constant or
increasing total cross section. The construction starts from an explicit
ansatz with a Regge type asymptotics. The investigations include models with
simple and with double Regge poles \cite{Kupsch:1971} and models with
crossing Regge cuts \cite{Kupsch:1982} that saturate the Froissart bound.

In the case of a leading Regge pole at angular momentum $l=\alpha (t)$ the
asymptotic behaviour of the absorptive part is%
\begin{equation}
A_{s}(s,t)\simeq \beta (t)~s^{\alpha (t)}\;\mathrm{for}\;s\rightarrow \infty
\label{r1}
\end{equation}%
where $t$ lies in some interval, which includes $t=0$. The Regge trajectory $%
\alpha (t)$ and the residue function $\beta (t)$ are real analytic functions
with cuts starting at $t=4$. If the amplitude has a double pole at angular
momentum $l=\alpha (t)$ the amplitude has the asymptotic behaviour%
\begin{equation}
A_{s}(s,t)\simeq \beta (t)~s^{\alpha (t)}\log s\;\mathrm{for}\;s\rightarrow
\infty .  \label{r2}
\end{equation}%
The Froissart bound imposes the restriction $\alpha (0)\leq 1$ for the
intercept. In the subsequent section \ref{R-inel} we recapitulate the
construction of such Regge amplitudes. Thereby the limit case $\alpha (0)=1$
is included.

In \cite{Kupsch:1982} these methods have been extended to amplitudes with
crossing Regge cuts. The main step for the construction of such amplitudes
that saturate the Froissart bound is recapitulated in section \ref{Froissart}%
.

Since the double spectral functions of Regge amplitudes are oscillating and
increasing like $s^{\mathrm{Re}\,\alpha (t)}$ for $s\rightarrow \infty $,
the methods of section \ref{dsf} and of Appendix \ref{PW} are no longer
sufficient to derive the unitarity inequalities (\ref{int9}). As new
technique a linearization of the quadratic inequalities is used. This method
has been developed in \cite{Kupsch:1971,Kupsch:1982} and it is presented in
the next section before the details of Regge amplitudes are discussed.

\subsection{The linear unitarity inequalities\label{lin}}

Let $F(s,t)$ be an analytic function, which has partial wave amplitudes $%
f_{l}(s)$ that satisfy the relations%
\begin{eqnarray}
\left\vert f_{l}(s)\right\vert &\leq &c_{1},  \label{lin1a} \\
\left\vert \mathrm{Re}\,f_{l}(s)\right\vert &\leq &c_{2}\,\mathrm{Im}%
\,f_{l}(s)  \label{lin1b}
\end{eqnarray}%
for $l=0,1,2,...$ and $s\geq s_{1}\geq 4$ with some constants $c_{1}>0$ and $%
c_{2}\geq 0$. Then the inequalities $\left( \mathrm{Im}\,f_{l}\right)
^{2}\leq c_{1}\mathrm{Im}\,f_{l},$ and $\left( \mathrm{Re}\,f_{l}\right)
^{2}\leq c_{1}\left\vert \mathrm{Re}\,f_{l}(s)\right\vert \leq c_{1}c_{2}\,%
\mathrm{Im}\,f_{l}$ follow with the final result%
\begin{equation}
\left\vert f_{l}\right\vert ^{2}\leq c_{1}(1+c_{2})\,\mathrm{Im}\,f_{l},%
\hspace{1.5cm}l=0,1,2,..,s\geq s_{1}\geq 4.  \label{lin2}
\end{equation}%
After an appropriate scaling the partial wave amplitudes of $A(s,t)=c\cdot
F(s,t)$ with $0<c\leq \left( c_{1}+c_{1}c_{2}\right) ^{-1}$ satisfy the
normalized inelastic unitarity inequalities (\ref{int9}). Hence the
constraints (\ref{lin1a}) and (\ref{lin1b}) imply the quadratic unitarity
inequalities (\ref{int9}) for a scaled amplitude. The constraints (\ref%
{lin1a}) and (\ref{lin1b}) will be denoted as \textit{linear unitarity
inequalities}.

For the Regge amplitudes of this section the inelastic inequalities are
derived in the linear form (\ref{lin1a}) and (\ref{lin1b}). Thereby the
uniform bound (\ref{lin1a}) is obtained by a simple estimate of the integral
(\ref{int6}). The essential tool to derive the inequalities (\ref{lin1b}) is
an order relation between real functions \cite{Kupsch:1971,Kupsch:1982}.

We first define a set of functions $\phi (s,t)$ which have the following
properties:

\begin{enumerate}
\item[$\protect\alpha )$] The function $\phi (s,t)$ is H\"{o}lder continuous
in $s\geq s_{1}\geq 4$ and holomorphic in the cut plane $t\in \mathbb{C}%
_{cut}$, it is real if $s\geq s_{1}$ and $-s<t<4$.

\item[$\protect\beta )$] The partial wave amplitudes of $\phi (s,t)=2\sqrt{%
\frac{s}{s-4}}\sum_{l}(2l+1)f_{l}(s)P_{l}\left( 1+\frac{2t}{s-4}\right) $
are positive, $f_{l}(s)\geq 0,$ for $l=0,1,2,..,$ and $s\geq s_{1}$.
\end{enumerate}

The set of functions with these properties is denoted by $\mathcal{A}$, or
more precisely by $\mathcal{A}(s_{1})$. If $\phi _{1}(s,t)\in \mathcal{A}$
and $\phi _{2}(s,t)\in \mathcal{A}$, then the sum $\alpha \phi
_{1}(s,t)+\beta \phi _{2}(s,t)$ with $\alpha \geq 0,\,\beta \geq 0,$ and the
product $\phi _{1}(s,t)\phi _{2}(s,t)$ are also elements of $\mathcal{A}$.
As a consequence of these simple statements more complicated constructions
are possible; e.g., if $\phi (s,t)\in \mathcal{A}$, then also $\exp \lambda
\phi (s,t)\in \mathcal{A}$ if $\lambda \geq 0$. For more details see
Appendix \ref{PWpositive}. The statement $\phi (s,t)\in \mathcal{A}(4)$ is
the usual positivity constraint for the absorptive part of a scattering
amplitude \cite{Martin:1969}.

For functions which satisfy the conditions $\alpha )$ a partial order $\phi
_{1}\prec \phi _{2}$ can be defined by%
\begin{equation}
\phi _{1}(s,t)\prec \phi _{2}(s,t)\quad \mathrm{if}\quad \phi _{2}(s,t)-\phi
_{1}(s,t)\in \mathcal{A}.  \label{lin3}
\end{equation}%
This relation is preserved by multiplication with a function $\chi (s,t)\in
\mathcal{A}$%
\begin{equation}
\phi _{1}\prec \phi _{2}\quad \curvearrowright \quad \chi \cdot \phi
_{1}\prec \chi \cdot \phi _{2}.  \label{lin4}
\end{equation}%
The unitarity inequalities (\ref{lin1b}) are equivalent to the linear
relations%
\begin{equation}
-c_{2}\,\mathrm{Im}\,F(s+i0,t)\prec \mathrm{Re}\,F(s+i0,t)\prec c_{2}\,%
\mathrm{Im}\,F(s+i0,t)  \label{lin5}
\end{equation}%
with $s\geq s_{1}\geq 4$. To prove that the partial wave amplitudes of an
analytic function $F(s,t)$ satisfy the inequalities (\ref{lin2}) it is
therefore sufficient to derive the uniform bounds (\ref{lin1a}) and to
establish the order relations (\ref{lin5}) with a number $c_{2}\geq 0$.

The linear relations (\ref{lin1a}) and (\ref{lin1b}) are stronger than the
quadratic inequalities, and they are not compatible with the double spectral
region imposed by elastic unitarity.

\subsection{Amplitudes with Regge poles\label{R-inel}}

In this section we recapitulate the construction of crossing symmetric
amplitudes, which satisfy crossing symmetry and the inelastic unitarity
inequalities (\ref{int9}), and which exhibit the Regge asymptotics (\ref{r1}%
) or (\ref{r2}). All results have been derived for bounded Regge
trajectories $\alpha (t)$. Moreover the imaginary part is assumed to be
positive, $\mathrm{Im}\,\alpha (t+i0)\geq 0$, such that $\alpha (t)$ is a
convex increasing function for $t\in \left( -\infty ,4\right] $. The limit
case of a fixed pole is admitted. From general arguments we know that the
following restrictions must be obeyed: $\alpha (0)\leq 1$ (Froissart bound),
and $\alpha (4)<2$ \cite{Jin/Martin:1964b}. Amplitudes with the following
types of trajectories and residue functions have been constructed and will
be discussed in the sequel:

\begin{enumerate}
\item The leading Regge singularity is a moving pole at $\alpha (t)$. The
function $\alpha (t)$ is a bounded real analytic function with a cut
starting at $t=4$%
\begin{equation}
\alpha (t)=\alpha (\infty )+\frac{1}{\pi }\int_{4}^{\infty }\frac{\mathrm{Im}%
\,\alpha (x+i0)}{x-t}dx  \label{r5}
\end{equation}%
The imaginary part $\mathrm{Im}\,\alpha (t+i0)$ is H\"{o}lder continuous
with support in the interval $4\leq t<\infty $, it is strictly positive for $%
t>4$. The intercept $\alpha (0)$ has a value in the interval $0<\alpha
(0)\leq 1$, and the limit $\alpha (\infty )=\lim_{t\rightarrow \infty
}\alpha (t)$ can be any number $\alpha (\infty )<\alpha (0)$.

\item The leading Regge singularity is a fixed pole at $\alpha (t)=\alpha
_{0},\,t\in \mathbb{C}$, with $0<\alpha _{0}\leq 1$. Such amplitudes do not
contradict Gribov's Theorem since elastic unitarity is not satisfied. A
construction of the fixed pole solutions is indicated in section \ref{Gribov}%
. Martin and Richard \cite{Martin/Richard:2001} have presented another
simpler construction of an amplitude with a fixed pole at $\alpha
(t)=1,\,t\in \mathbb{C}$. But their proof of the inelastic inequalities does
not include all partial wave amplitudes.

\item The Regge singularity is a double pole on a rising trajectory $\alpha
(t)$ which has an intercept $\alpha (0)\leq 1$. The intercept can have the
value $\alpha (0)=1$ such that the total cross section (\ref{int4})
increases logarithmically. Moreover, there exist solutions with a fixed
double pole at $l=\alpha <1$.

\item The residue $\beta (t)$ is a real analytic function with a cut
starting at $t=4$. The partial wave coefficients of $\beta (t)$ are
positive, and the function $\beta (t)$ has the upper bound $\left\vert \beta
(t)\right\vert \leq const\,\left( 1+\left\vert t\right\vert \right)
^{-\delta },\,t\in \mathbb{C}_{cut}$, with $\delta >2^{-1}\left( 1+\alpha
(\infty )\right) $.
\end{enumerate}

\subsubsection{Amplitudes with rising Regge trajectories\label{risingR}}

The construction of amplitudes with Regge poles follows \cite{Kupsch:1971}.
The amplitudes have the structure
\begin{equation}
A(s,t)=\hat{R}(s,u)+\hat{R}(s,t)+\hat{R}(t,u)+G(s,t).  \label{r3}
\end{equation}%
Thereby $\hat{R}(s,u)$ is an $s-u$ symmetric Regge ansatz and $G(s,t)$ is an
$s-t-u$ crossing symmetric background function. The construction proceeds in
three steps:

\begin{enumerate}
\item In the first step an explicit ansatz for an $s-u$ symmetric Regge
function $\hat{R}(s,u)$ is given with the following properties:\newline
-- $\hat{R}(s,u)$ has the asymptotic behaviour (\ref{r1}),\newline
-- the crossed amplitudes $\hat{R}(s,t)$ and $\hat{R}(t,u)$ vanish for $%
s\rightarrow \infty $ and $t$ fixed.

\item The second step is the proof that $\hat{R}(s,u)$ satisfies (up to a
scale factor) the inequalities (\ref{int9}) for energies $s\geq 20$. For
that purpose the linear unitarity inequalities of section \ref{lin} are used.

\item The remaining task is the proof that one can find a crossing symmetric
amplitude $G(s,t)$, which satisfies a Mandelstam representation with
positive spectral functions (as discussed in section \ref{dsf}) such that (%
\ref{r3}) fulfills the unitarity inequalities (\ref{int9}) above threshold $%
s\geq 4$. If the trajectory function approaches negative values $\alpha
(\infty )<0$ for large momentum transfers, the background function $G(s,t)$
can be chosen to satisfy an unsubtracted Mandelstam representation.
\end{enumerate}

Thereby it is possible to keep the threshold behaviour (\ref{int10}) in all
steps of the construction.

To formulate an ansatz for $\hat{R}(s,u)$ a positive weight function $\sigma
(s)$ is introduced with support in the interval $17\leq s\leq 18$. This
weight function is normalized to $\int \sigma (s)ds=1$. The ansatz for $\hat{%
R}(s,u)$ is%
\begin{equation}
\hat{R}(s,u)=-\frac{\beta (t)}{\sin \pi \gamma (t)}\int_{17}^{18}ds^{\prime
}\sigma (s^{\prime })\int_{17}^{18}du^{\prime }\sigma (u^{\prime })\left[
(s^{\prime }-s)(u^{\prime }-u)\right] ^{\gamma (t)},  \label{r4}
\end{equation}%
where we have used $\gamma (t):=2^{-1}\alpha (t)$ to simplify the notations.
The absorptive part of $\hat{R}(s,u)$ behaves like (\ref{r1}) for $%
s\rightarrow \infty $ and $t$ fixed. The function $\sigma (s)$ should be
sufficiently differentiable such that the convolution with $\sigma (s)$
yields a smooth behaviour of $\hat{R}(s,u)$ at the thresholds of the
variables $s$ and $u$. The thresholds in $s$ and $u$ have been pushed to $%
s=17\;(u=17)$ such that (\ref{r4}) does not contribute to the double
spectral function in the elastic domain $s\leq 16$. But it is possible to
take any other threshold $s>4$ for this construction. For large real values
of $s$ and fixed $t$ the amplitude behaves like
\begin{equation}
\hat{R}(s+i0,u)\simeq -\frac{\beta (t)}{\sin \pi \gamma (t)}\exp \left(
-i\pi \gamma (t)\right) s^{\alpha (t)}\quad \mathrm{if}\;s\rightarrow \infty
.  \label{r4a}
\end{equation}%
It is possible to choose a residue function $\beta (t)$ such that the
crossed terms $\hat{R}(s,t)+\hat{R}(t,u)$ do not contribute to the
asymptotics for large $s$. If $\alpha (\infty )<0$ (and consequently $\gamma
(\infty )<0$) the factor $\left( \sin \pi \gamma (t)\right) ^{-1}$ has poles
in the physical region $t<0$. These poles have to be canceled by zeroes of
the residue function $\beta (t)$.

For a Regge trajectory with $\alpha (0)\leq 1$ and $\alpha ^{\prime }(0)\geq
0$ the integral (\ref{int6}) yields a uniform bound (\ref{lin1a}) for the
partial wave amplitudes%
\begin{equation}
\left\vert f_{l}(s)\right\vert \leq c\sqrt{\frac{s-4}{s}}s^{\alpha
(0)-1}\leq c\sqrt{\frac{s-4}{s}},\;l=0,1,2,...,\;s\geq 4.  \label{r6}
\end{equation}%
For energies $s\geq 18$ and $\,4-s\leq t\leq 0$ the imaginary and the real
part of (\ref{r4}) $\hat{R}(s+i0,u)$ are given by%
\begin{eqnarray}
&&\mathrm{Im}\,\hat{R}(s+i0,u)=\beta (t)\cdot \int ds^{\prime }\sigma
(s^{\prime })(s-s^{\prime })^{\gamma (t)}\int du^{\prime }\sigma (u^{\prime
})\left[ (u^{\prime }-4+s+t)\right] ^{\gamma (t)},  \label{r8} \\
&&\mathrm{Re}\,\hat{R}(s+i0,u)=-\beta (t)\cdot \cot \pi \gamma (t)\cdot \int
ds^{\prime }\sigma (s^{\prime })(s-s^{\prime })^{\gamma (t)}  \nonumber \\
&&\hspace{4cm}\times \int du^{\prime }\sigma (u^{\prime })(u^{\prime
}-4+s+t)^{\gamma (t)}.  \label{r9}
\end{eqnarray}%
Under appropriate assumptions about the residue function $\beta (t)$ it is
possible to derive the linear unitarity relations (\ref{lin5}) for these
functions if $s\geq 20$. Hence the Regge ansatz $\hat{R}(s,u)$ satisfies the
unitarity inequalities (\ref{lin2}) for $s\geq 20$. The extension to the
crossing symmetric amplitude (\ref{r3}) with the correct unitarity
inequalities (\ref{int9}) for $s\geq 4$ follows the constructions in section
3.3 of \cite{Kupsch:1971}. Some of the relevant calculations are given in
the Appendices \ref{unitRegge} and \ref{crossing}.

We would like to add three remarks.

\begin{remark}
\label{convex}To derive the inelastic unitarity inequalities for a sum like (%
\ref{r3}) the following statement is important. The set of amplitudes, which
satisfy the inelastic unitarity conditions (\ref{int9}), is a convex cone:
If $A_{1}(s,t)$ and $A_{2}(s,t)$ are two analytic functions with partial
wave amplitudes, which fulfil the relations (\ref{int9}), then the sum $%
\alpha _{1}A_{1}(s,t)+\alpha _{2}A_{2}(s,t)$ has the same property, if the
constants $\alpha _{1,2}$ are positive and $\alpha _{1}+\alpha _{2}\leq 1$.
\end{remark}

\begin{remark}
The Regge ansatz (\ref{r4}) has a rectangular support of its double spectral
region \newline
$\left\{ (s,t)\mid 17\leq s<\infty ,\,4\leq t<\infty \right\} $. Such a
support is compatible with the linear unitarity inequalities (\ref{lin1b}).
It is possible to choose the background term $G(s,t)$ such that the full
amplitude (\ref{r3}) has a double spectral region as required by elastic
unitarity.
\end{remark}

\begin{remark}
There are two essential ingredients for the construction of Regge amplitudes
presented in this section: The imaginary part of the trajectory function $%
\alpha (t)$ is positive, and the residue $\beta (t)$ function has positive
partial wave coefficients. These constraints are not compatible with elastic
unitarity, see section \ref{R-unitary}.
\end{remark}

\subsubsection{Amplitudes with a fixed pole\label{Gribov}}

The construction for amplitudes with a fixed pole at angular momentum $%
l=\alpha $ with $0<\alpha \leq 1$ is the same as for amplitudes with rising
trajectories. We start again from the ansatz (\ref{r4}). Since $\cot \pi
\gamma $ is a (finite) number the prove of the linear relations (\ref{lin5})
for $\hat{R}(s+i0,u)$ is simpler. The partial wave amplitudes have again the
uniform bound (\ref{r6}). In the particularly interesting case $\alpha
=1\;(\gamma =\frac{1}{2})$ the real part of the amplitude $\hat{R}(s+i0,u)$
vanishes for $s\geq 18$, and the relations (\ref{lin5}) follow immediately
for these energies. The extension to a crossing symmetric amplitude (\ref{r3}%
) $A(s,t)$ is done as in the case of rising Regge trajectories.

\subsubsection{Amplitudes with double poles\label{double}}

The extension to amplitudes with the asymptotic behaviour (\ref{r2}) has
been given in section 4. of \cite{Kupsch:1971}. In the language of Regge
poles the behaviour (\ref{r2}) corresponds to a double pole at angular
momentum $l=\alpha (t)$. In the first step the ansatz (\ref{r4}) is
generalized to an amplitude $\hat{R}_{1}(s,u)$ with a leading term $\sim
s^{\alpha (t)}\cdot \log s$. Let $\sigma (x)\geq 0,\,x\in \mathbb{R}$, be a
positive continuous function on the real line which has its support in the
interval $17\leq x\leq 18$ and which is normalized to $\int \sigma (x)dx=1$.
One can take the function $\sigma (x)$ used in (\ref{r4}). Then $L(s):=\int
\sigma (x)\log (x-s)dx$ defines a smooth version of the logarithm on the cut
plane $\mathbb{C}_{cut}$. The boundary values at $s=x\pm i0,\,x\geq 4$, are H%
\"{o}lder continuous with any index $0<\mu <1$. The $s-u$ symmetric function%
\begin{equation}
H(s,t)=L(s)+L(4-s-t)-L(t+1)  \label{r34}
\end{equation}%
is holomorphic in $(s,t)\in \mathbb{C}_{cut}^{2}$, and it has the upper
bounds $\left\vert \mathrm{Im}\,H(s,t)\right\vert \leq 3\pi $ and \newline
$\left\vert \mathrm{Re}\,H(s,t)\right\vert \leq 2\log (1+\left\vert
s\right\vert )+const$ for $(s,t)\in \mathbb{C}_{cut}^{2}$. If $s\geq 20$ and
$-s<t<4$ it can be written as%
\begin{equation}
H(s\pm i0,t)=h(s,t)\mp i\pi ,  \label{r36}
\end{equation}%
with a positive real part $h(s,t)=\int \sigma (x)\log \left[
(s-x)(x-4+s+t)(x-t-1)^{-1}\right] dx>0$. For large $s$ the real part has the
asymptotic behaviour $h(s,t)\simeq 2\log s$. Moreover $h(s,t)$ has positive
partial wave amplitudes, $h(s,t)\in \mathcal{A}(20)$, see Appendix \ref%
{PWpositive}.

The Regge ansatz (\ref{r4}) is now modified to%
\begin{equation}
\hat{R}_{1}(s,u)=\frac{1}{2}H(s,t)\cdot \hat{R}(s,u)  \label{r37}
\end{equation}%
For $s\rightarrow \infty $ along the real line the absorptive part behaves
like%
\begin{equation}
\mathrm{Abs}_{s}\hat{R}_{1}(s,u)\simeq \beta (t)~s^{\alpha (t)}\log s.
\label{r38}
\end{equation}%
The partial wave amplitudes of $\hat{R}_{1}(s,u)$ are uniformly bounded, if
the trajectory has the properties $\alpha (0)<1$ and $\alpha ^{\prime
}(0)\geq 0$ or $\alpha (0)=1$ and $\alpha ^{\prime }(0)>0$. Therefore
amplitudes with double poles can be constructed either if the pole position
lies on a rising trajectory $\alpha (t)$ with an intercept $\alpha (0)\leq 1$%
, or if the pole has a fixed position at $l=\alpha <1$.

The linear unitarity relations (\ref{lin5}) are derived for $\hat{R}%
_{1}(s,u) $ with the techniques presented in section \ref{lin}. The
extension to a crossing symmetric amplitude%
\begin{equation}
A(s,t)=\hat{R}_{1}(s,u)+\hat{R}_{1}(s,t)+\hat{R}_{1}(t,u)+G(s,t)  \label{r39}
\end{equation}%
is done as for the amplitudes (\ref{r3}). The crossing symmetric function $%
G(s,t)$ is again a background term, which satisfies a Mandelstam
representation without or with one subtraction. The asymptotic behaviour for
$s\rightarrow \infty $ of the full amplitude $A(s,t)$ is dominated by $\hat{R%
}_{1}(s,u)$, and (\ref{r38}) implies (\ref{r2}). If $\alpha (0)=1$ the
forward amplitude behaves like $A(s,0)\simeq \hat{R}_{1}(s,u=4-s)\simeq
i\beta (0)\,s\left( \log s-i\frac{\pi }{2}\right) $ for $s\rightarrow
+\infty $

\subsection{Amplitudes that saturate the Froissart bound \label{Froissart}}

From an investigation of the forward peak of the scattering amplitude one
knows that an amplitude, which saturates the Froissart bound and obeys the
inelastic constraints must have a $\sqrt{-t}\log s$ shrinking of the forward
peak \cite{AKM:1971}. A behaviour of this type can be achieved by a Regge
model with complex conjugate Regge trajectories (poles or cuts), which cross
at $t=0$, see e. g. \cite{Eden:1971,Oehme:1972}. But this literature does
not tell us, how crossing symmetry can be imposed \cite{Khuri:1976}.
Fortunately, it is possible to follow the steps used for the Regge
amplitudes in the last sections. In \cite{Kupsch:1982} a class of amplitudes
has been constructed that saturate the bound and that rigorously satisfy the
requirements 1.-- 3. and 5. of section \ref{not}. The essential part of this
paper is the ansatz of an $s-u$ symmetric amplitude $\hat{F}(s,u)$ that
satisfies \newline
-- the requirements 1. and 2. of section \ref{not},\newline
-- the unitarity inequalities (\ref{int9}) (up to a scaling factor) beyond
some energy $s\geq s_{1}\geq 4$,\newline
-- the saturation of the Froissart bound. \newline
The full crossing symmetric amplitude, which fulfills all constraints with
exception of elastic unitarity, can then be obtained following the methods
developed for the Regge amplitudes.

We indicate a few steps of the construction of $\hat{F}(s,u)$. Starting from
the $s-u$ symmetric function (\ref{r34}) $H(s,t)=H(4-s-t,t)$ we define the
function%
\begin{equation}
\Gamma (s,t,z)=-\frac{1}{\cos (\pi /2)z}\exp \left[ \frac{1+z}{2}H(s,t)%
\right]  \label{f1}
\end{equation}%
of the complex variables $s,t$ and $z$. This function is holomorphic in the
variables $(s,t)$ in the Mandelstam domain $\mathbb{C}_{cut}^{2}$, and it is
holomorphic in the variable $z$ in the open unit disc $\left\{ z\mid
\left\vert z\right\vert <1\right\} \subset \mathbb{C}$ of the complex plane.
The $s$-channel absorptive part is, see (\ref{r36}),%
\begin{equation}
\mathrm{Abs}_{s}\Gamma (s,t,z)=\left( \tan \frac{\pi }{2}z+i\right) \exp %
\left[ \frac{1+z}{2}h(s,t)\right] \quad \mathrm{if}\;s\geq 18.  \label{f2}
\end{equation}

Let $\gamma (\xi )$ be a real analytic function of the variable $\xi \in
\mathbb{C}\backslash \left[ 2,\infty \right) $, vanishing at $\xi
=0,\;\gamma (0)=0$, and with a H\"{o}lder continuous and positive imaginary
part $\mathrm{Im}\,\gamma (x+i0)\geq 0$ for $\,x\geq 2$. We assume a uniform
bound%
\begin{equation}
\left\vert \gamma (\xi )\right\vert \leq \delta <1\;\mathrm{for}\;\xi \in
\mathbb{C}\backslash \left[ 2,\infty \right) .  \label{f4}
\end{equation}%
Then the function%
\begin{equation}
\Phi (s,t;\gamma ):=\frac{1}{\sqrt{t}}\left[ \Gamma (s,t,\gamma (\sqrt{t}%
))-\Gamma (s,t,\gamma (-\sqrt{t}))\right]  \label{f5}
\end{equation}%
is an $s-u$ symmetric amplitude, which is analytic in the Mandelstam domain.
For $t<0$ it develops two complex conjugate Regge trajectories of simple
poles at angular momenta $l=1+\gamma (\pm i\sqrt{-t})$, which cross at $t=0$%
. Moreover, the function $\Phi (s+i0,t;\gamma )$ satisfies the linear
unitarity relations of section \ref{lin} beyond the energy $s=20$. The proof
for this statement is given in section 3 of \cite{Kupsch:1982}.

The real part of $\gamma $ along the imaginary axis $\mathrm{Re}\,\gamma
(i\tau )=\mathrm{Re}\,\gamma (-i\tau )$ is a monotonically decreasing
function of $\tau \in \left[ 0,\infty \right) $ with values from $\gamma
(0)=0$ to $\gamma (\infty )<0$. For values of $s>18$ and $t<0$ the imaginary
part of (\ref{f5}) is given by%
\[
\mathrm{Im}\,\Phi (s+i0,t;\gamma )=\frac{2}{\sqrt{-t}}\exp \frac{\left(
1+a(t)\right) h(s,t)}{2}\cdot \sin \left( \frac{b(t)}{2}h(s,t)\right)
\]%
with $a(t)=\mathrm{Re}\,\gamma (i\sqrt{-t})<0$ and $b(t)=\mathrm{Im}\,\gamma
(i\sqrt{-t})>0$. The asymptotic behaviour is%
\[
\mathrm{Im}\,\Phi (s+i0,t;\gamma )\simeq 2\beta _{0}(t)s^{1+a(t)}\frac{\sin %
\left[ b(t)\log s\right] }{\sqrt{-t}}
\]%
with $\beta _{0}(t)=2\exp \left[ -\frac{1}{2}L(t)\left( 1+a(t)\right) \right]
$. At $t=0$ we have%
\begin{equation}
\mathrm{Im}\,\Phi (s+i0,0;\gamma )\sim s\log s\quad \mathrm{if}%
\;s\rightarrow \infty  \label{f7}
\end{equation}%
as for the amplitude (\ref{r38}) with the Regge double pole. But the
amplitude $\Phi (s,t;\gamma )$ has -- in contrast to the amplitudes of
section \ref{R-inel} -- a $\sqrt{-t}\log s$ shrinking of the forward peak.
It satisfies the bound%
\begin{equation}
\left\vert \Phi (s+i0,t;\gamma )\right\vert \leq c\cdot \chi
_{1,1}(s,t)\quad \mathrm{if}\;s\geq 4\;\mathrm{and}\;4-s\leq t\leq 0
\label{f8}
\end{equation}%
with some positive constant $c$ and the function%
\begin{equation}
\chi _{m,n}(s,t)=\left\{
\begin{array}{ll}
s(\log s)^{m}\left( 1+\sqrt{-t}\log s\right) ^{-n} & \quad \mathrm{if}%
\;-1\leq t\leq 0, \\
s^{p}(-t)^{-(1+p)/2} & \quad \mathrm{if}\;t<-1,%
\end{array}%
\right.  \label{f9}
\end{equation}%
where $p$ is a parameter in the interval $0<p<1$, and $m$ and $n$ are
integers.

The amplitude $H(s,t)\Phi (s,t;\gamma )$ saturates the Froissart bound, but
its partial wave amplitudes are not bounded for $s\rightarrow \infty $. It
is necessary to improve the shrinking of the forward peak. For that purpose
we introduce a smooth positive weight function $\rho (\lambda )$ which
satisfies%
\begin{equation}
\begin{array}{l}
\rho (\lambda )\in \mathcal{C}^{3}(\mathbb{R}),\;\rho (\lambda )\geq
0,\;\int \rho (\lambda )d\lambda =1, \\
\rho (\lambda )=0,\;\mathrm{if}\;\lambda \notin \left[ \lambda _{1},\lambda
_{2}\right] ,\;0<\lambda _{1}<\lambda _{2}<\delta ^{-1}.%
\end{array}%
\end{equation}%
If $\lambda \in \left[ \lambda _{1},\lambda _{2}\right] $ then $\lambda
\gamma (\xi )$ has the same properties as demanded for $\gamma (\xi )$, the
bound (\ref{f4}) is replaced by $\left\vert \lambda \gamma (\xi )\right\vert
\leq \lambda _{2}\delta <1$. The integral%
\begin{equation}
\widetilde{\Phi }(s,t)=\int_{\lambda _{1}}^{\lambda _{2}}\rho (\lambda )\Phi
(s,t;\lambda \gamma )d\lambda  \label{f11}
\end{equation}%
defines again an analytic function in the Mandelstam domain. The $\lambda $%
-integration leads to a rapidly decreasing function of $\sqrt{-t}\log s$
yielding a stronger shrinking of the forward peak%
\begin{equation}
\left\vert \widetilde{\Phi }(s+i0,t)\right\vert \leq const\cdot \chi
_{1,3}(s,t)\quad \mathrm{if}\;s\geq 4\;\mathrm{and}\;4-s\leq t\leq 0.
\label{f12}
\end{equation}%
whereas the forward amplitude $\widetilde{\Phi }(s,0)=\Phi (s,0;\gamma )$
remains unchanged.

Let $\beta (t)$ be a bounded real analytic function of the complex variable $%
t\in \mathbb{C}\backslash \left[ 4,\infty \right) $ with positive partial
wave coefficients. Then the $s-u$ symmetric amplitude%
\begin{equation}
\hat{F}(s,u)=\beta (t)H(s,t)\widetilde{\Phi }(s,t)  \label{f13}
\end{equation}%
saturates the Froissart bound. Since $\Phi (s+i0,t;\gamma )$ fulfils the
linear unitarity relations (\ref{lin5}), it is straightforward to derive
such relations also for $\hat{F}(s,u)$ beyond the energy $s\geq 20$. The
extension to a crossing symmetric amplitude
\[
A(s,t)=\hat{F}(s,u)+\hat{F}(s,t)+\hat{F}(t,u)+G(s,t)
\]%
which satisfies the inelastic constraints for all energies $s\geq 4$ follows
the same steps as used for the Regge amplitudes of section \ref{R-inel}. The
background function $G(s,t)$ can again be represented by a crossing
symmetric Mandelstam integral without subtraction and with a positive double
spectral function. For the details see \cite{Kupsch:1982}.

\section{Regge amplitudes that satisfy elastic unitarity\label{R-el}}

In the first part of this section we discuss the construction of amplitudes,
which satisfy\newline
-- Mandelstam analyticity and crossing symmetry in $s,t$ and $u$,\newline
-- elastic unitarity for $4\leq s\leq 16,$\newline
-- the inelastic unitarity inequalities $\left\vert a_{l}(s)\right\vert \leq
1$ for $s\geq 16$, \newline
-- Regge asymptotics with a trajectory $\alpha (t)$, which has an intercept $%
0<\alpha (0)\leq 1$.

\noindent If one does not care about unitarity constraints in the inelastic
region, the trajectory function $\alpha (t)$ can be rather arbitrary, and an
intercept $\alpha (0)>1$ is possible. The remaining -- and still unsolved
problem -- to prove the existence of a Regge amplitude, which satisfies
crossing symmetry, elastic unitarity and all inelastic constraints, is
discussed in section \ref{R-unitary}.

So far there is no extension of the fixed point equations of section \ref%
{R-fixedpoint} to amplitudes with double poles or to amplitudes with
crossing Regge trajectories and cuts as needed for the saturation of the
Froissart bound.

\subsection{The fixed point problem for Regge amplitudes\label{R-fixedpoint}}

It is possible to apply the fixed point method of section \ref{fixedpoint}
to amplitudes, which have a Regge ansatz of the type presented in section %
\ref{risingR} as inhomogeneous term, see e.g. \cite{Kupsch:1970a}\footnote{%
The Regge amplitudes constructed in section 2.1 of \cite{Kupsch:1970a} do
not have all the properties indicated there. But one can substitute these
erroneous Regge amplitudes by the amplitudes defined in section 3. of \cite%
{Kupsch:1971} (and in section \ref{risingR} of the present paper).}. But
then one has to assume that $\mathrm{Re}\,\alpha (t)<1$ for $t\leq 16$, and
in the physical region the asymptotics is dominated by the background
generated by the iteration. One therefore needs a more elaborate technique,
which exposes the Regge pole explicitly. For that purpose three approaches
have been used in the literature: \newline
-- the use of partial wave amplitudes and the Watson-Sommerfeld transform,
\newline
-- the use of partial wave amplitudes and N/D equations, and \newline
-- the Khuri representation of the amplitude.

Elastic unitarity can be easily formulated for partial wave amplitudes with
complex angular momenta. But the analyticity properties of the
Watson-Sommerfeld transform cause some problems in constructing a crossing
symmetric scattering amplitude with the correct Mandelstam analyticity.
These problems have been solved by Atkinson and his collaborators \cite%
{AFJK:1976,Frederiksen:1975}. But unfortunately the fixed point equation
obtained in \cite{AFJK:1976} does not guarantee elastic unitarity for the
pole term.

The N/D equations used by Johnson and Warnock \cite%
{Warnock2:1977,Warnock3:1977,Warnock4:1981} have the advantage that the
Regge pole appears as zero of the denominator function, and analyticity and
unitarity of the pole term is naturally included in the equations. But these
authors have not given a conclusive proof, whether and under what conditions
a fixed point solutions exists.

The following account is based on \cite{Kupsch:1977} and uses the Khuri
representation \cite{Khuri:1963}, in which the correct analyticity is easily
exposed, but the unitarity identity is more involved. The calculations can
also be performed with the Watson-Sommerfeld integral representation
(borrowing methods from the publications \cite{AFJK:1976,Frederiksen:1975}
to obtain the correct analyticity). Since elastic unitarity imposes a
non-linear relation between the Regge term and the background term, the
fixed point mapping has to refer to the degrees of freedom of the Regge
trajectory and of the background amplitude.

In the subsequent arguments the Regge poles appear in the large $t$
asymptotics. The trajectory is therefore written as function of $s$. We
consider an amplitude with one Regge trajectory $\alpha (s)$, which has the
following properties:

\begin{enumerate}
\item[a)] The function $\alpha (s)$ is real analytic and bounded. It has a H%
\"{o}lder continuous and strictly positive imaginary part $\mathrm{Im}%
\,\alpha (s+i0)>0$ if $s>4$.

\item[b)] The threshold behaviour of the trajectory is $\mathrm{Im}\,\alpha
(s+i0)\simeq c(s-4)_{+}^{\sigma +\frac{1}{2}},\,s\approx 4$, with the
exponent $\sigma =\alpha (4)$.

\item[c)] The values of $\alpha (s)$ are restricted to $-1<\alpha (\infty )<-%
\frac{1}{2},\;0<\alpha (4)<2$, and $\mathrm{Re}\,\alpha (s)<\gamma _{1}$ if $%
s\in \mathbb{C}_{cut}$. Thereby $\gamma _{1}$ is a constant in the open
interval $2<\gamma _{1}<3$.
\end{enumerate}

The strictly positive imaginary part of $\alpha (s+i0)$ in the elastic
interval $4<s\leq 16$ is needed to circumvent the problem encountered by
Gribov, see section \ref{G}. For the actual construction we assume the
stronger constraints a). The constraint b) is a consequence of elastic
unitarity \cite{Barut/Zwanziger:1962,Cheng/Sharp:1963}. The bounds c)
simplify the construction. Values of the intercept $\alpha (0)\approx 1$ are
included. Since we do not yet demand the inelastic constraints, also
solutions with $\alpha (0)>1$ exist.

The trajectory function is represented by%
\begin{equation}
\alpha (s)=\alpha \left[ \chi \right] (s)=\alpha _{0}(s)+\frac{s-4}{\pi }%
\int_{4}^{18}\sqrt{\frac{x-4}{x}}(x-4)^{\sigma -1}\frac{\chi (x)}{x-s}dx
\label{rel2}
\end{equation}%
The real analytic function $\alpha _{0}(s)$ is an input function for the
calculations. It has a cut at $s\geq 16$ and a positive imaginary part.
Since the value of $\alpha (4)$ determines the threshold behaviour of the
trajectory, we choose the subtraction point in (\ref{rel2}) at $s=4$. The
value of $\alpha _{0}(4)=\alpha (4)=\sigma $ is restricted to $0<\sigma <2$,
and we assume $-1<\alpha _{0}(\infty )<-\frac{1}{2}$. The trajectory (\ref%
{rel2}) is a functional of the strictly positive H\"{o}lder continuous
function $\chi (x)$, which is defined on the interval $4\leq x\leq 16$ and
extended to $x\geq 16$ by $\chi (x)=\chi (16)\lambda (x)$ with the cut off
function (\ref{el-12}).

The scattering amplitude has the structure (\ref{int12})
\begin{equation}
A(s,t)=Sym\,F(s,t)\quad \mathrm{with}\quad F(s,t)=R(s,t)+G(s,t).
\label{rel3}
\end{equation}%
The functions $R(s,t)$ and $G(s,t)$ are holomorphic in the domain (\ref%
{int11}). The term $R(s,t)$ has a Regge asymptotics $\mathrm{Abs}%
_{t}R(s,t)\simeq \beta (s)t^{\alpha (s)}$ for $t\rightarrow \infty $. The
function $G(s,t)$ is a background term. The ansatz (\ref{rel3}) corresponds
to a strip approximation \cite{Chew/Jones:1964}. To treat the Regge poles we
use the Khuri representation with respect to the variable $t$. The function $%
F(s,t)$ is represented by the Mellin-Barnes integral%
\begin{equation}
F(s,t)=-\frac{1}{2i}\int_{\mathcal{C}}\frac{f(s,\nu )}{\sin \pi \nu }%
(-t)^{\nu }d\nu .  \label{rel4}
\end{equation}%
The curve of integration $\mathcal{C}$ goes from $v=\gamma _{0}-i\infty $ to
$v=\gamma _{0}+i\infty $ with $-\frac{1}{2}<\gamma _{0}<0$, the (Khuri)
poles of the meromorphic function $f(s,\nu )$ lie to the left side of the
curve and the integers $\nu =0,1,2,...$ lie to the right side. The function $%
f(s,\nu )$ is the Mellin transform of $F_{t}(s,t)=\mathrm{Abs}_{t}F(s,t)$,
and the absorptive part of $F(s,t)$ in the $t$-channel is given by, see
Appendix \ref{Mellin},%
\begin{equation}
\mathrm{Abs}_{t}F(s,t)=\frac{1}{2i}\int_{\gamma _{1}}f(s,\nu )t^{\nu }d\nu .
\label{rel5}
\end{equation}%
The symbol \ $\int_{\gamma }d\nu ...$ means integration along the line $%
\left\{ \nu =\gamma +ix\mid -\infty <x<\infty \right\} $ with the fixed real
part $\mathrm{Re}\,\nu =$ $\gamma $. The line of integration in (\ref{rel5})
has to satisfy $\gamma _{1}>\sup \left\{ \mathrm{Re}\,\alpha (s)\mid s\in
\mathbb{C}_{cut}\right\} $. In the case without Regge poles as considered in
section \ref{fixedpoint} we have $F(s+i0,t+i0)\newline
=\Phi _{0}\left[ \psi \right] (s+i0,t+i0)\in \mathcal{L}(\gamma _{0},\delta
) $ with $-\frac{1}{2}+\mu <\gamma _{0}<\delta <0$, and the function $%
f(s,\nu ) $ is holomorphic in $\nu $ in the half plane $\left\{ \nu \mid
\mathrm{Re}\,\nu >\gamma _{0}\right\} $. Then the line of integration in (%
\ref{rel4}) and (\ref{rel5}) can be pushed back to $\mathrm{Re}\,\nu =\gamma
_{0}$. If a Regge pole at position $\nu =\alpha (s)$ enters the half plane $%
\left\{ \nu \mid \mathrm{Re}\,\nu >\gamma _{0}\right\} $ the function $%
f(s,\nu )$ has a series of Khuri poles at positions $\nu =\alpha (s),\alpha
(s)-1,\alpha (s)-2,...$. In a simplified picture, which neglects the Khuri
daughter poles (and the H\"{o}lder continuity in the variable $t$), the
separation of $f(s,\nu )$ into Regge pole contribution and holomorphic
background can be written as
\begin{equation}
f(s,\nu )=\frac{\beta (s)}{v-\alpha (s)}t_{1}^{\alpha (s)-\nu }+g(s,\nu ).
\label{rel7}
\end{equation}%
A more adequate pole term is given in Appendix \ref{pole}. The Mellin
transform $a(s,\nu )\newline
=\mathcal{M}\left[ A_{t}(s,t)\right] (\nu )$ of the full amplitude (\ref%
{rel3}) is then%
\begin{equation}
a(s,\nu )=f(s,\nu )+\mathrm{crossed\,terms}=\frac{\beta (s)}{v-\alpha (s)}%
t_{1}^{\alpha (s)-\nu }+b(s,\nu ).  \label{rel8}
\end{equation}%
Thereby the crossed Khuri pole terms contribute only to the holomorphic
background $b(s,\nu )$.

The elastic unitarity equation (\ref{el-11}) has been calculated for the
Khuri representation in \cite{Kupsch:1977}; a few details are also given
here in Appendix \ref{unitInt}. In this section we only use the truncated
form%
\begin{equation}
w^{\prime }(s,\nu )=\frac{1}{2}\lambda (s)\sqrt{\frac{s-4}{s}}(s-4)_{+}^{\nu
}B(1+\nu ,1+\nu )a(s+i0,\nu )a(s-i0,\nu ),  \label{rel9}
\end{equation}%
which exhibits the essential consequences of the exact identity (\ref{ru5}).
Thereby $B(x,y)$ is the Euler beta function. The function $w^{\prime }(s,\nu
)=\mathrm{Abs}_{s}f^{\prime }(s,\nu )$ is the Mellin transform of the double
spectral function of the image amplitude $F^{\prime }(s,t)$ generated by the
fixed point mapping. Without Khuri poles we can use the dispersion integral
\begin{equation}
f^{\prime }(s,\nu )=\frac{1}{\pi }\int_{4}^{18}w^{\prime }(x,\nu )\frac{1}{%
x-s}dx  \label{rel10}
\end{equation}%
to calculate the Mellin transform of $\mathrm{Abs}_{t}F^{\prime }(s,t)$. But
if $a(s,\nu )$ has a pole contribution (\ref{rel8}), then in the general
case the integral (\ref{rel10}) does not produce a pole but a cut. One has
to satisfy rather exceptional conditions to keep a Regge pole stable under
the iteration. To see this condition we assume that $w^{\prime }(s,\nu
)=w(s,\nu )=(2i)^{-1}\left( f(s+i0,\nu )-f(s-i0,\nu )\right) $ is a solution
of the equation (\ref{rel9}). Comparing the residues at $\nu =\alpha (s+i0)$
on both sides of (\ref{rel9}) we obtain the following identity in the
elastic interval $4\leq s\leq 16$
\begin{equation}
\overline{\beta }=\sqrt{\frac{s}{s-4}}(s-4)_{+}^{-\alpha }B^{-1}(1+\alpha
,1+\alpha )\,\mathrm{Im}\,\alpha -2i\,b(s-i0,\alpha )\,\mathrm{Im}\,\alpha
\label{rel11}
\end{equation}%
with $\alpha =\alpha (s+i0)$ and $\overline{\beta }=\beta (s-i0)=\overline{%
\beta (s+i0)}$. This identity relates the analytic functions $\alpha
(s),\,\beta (s)$ and the background term $b(s,\nu )$ in a highly non-trivial
manner. The ansatz (\ref{rel2}) with $\mathrm{Im}\,\alpha (s+i0)\sim
(s-4)^{\sigma +\frac{1}{2}}$ compensates exactly the singularity of the
right hand side at the threshold $s=4$. An essential point is: The exact
form of the equation (\ref{rel9}) leads also to an identity, which can be
written in the form (\ref{rel11}); only the interpretation of the
holomorphic function $b(s,\nu )$ has changed. The same type of identity
emerges, if one performs the calculations with partial wave amplitudes.

The fixed point mapping for elastic unitarity has to take into account the
constraint (\ref{rel11}). In \cite{Kupsch:1977} a non-linear mapping $\chi
\rightarrow \chi ^{\prime }=T\left[ \chi ,b\right] $ for the unknown
function $\chi (x)$ in (\ref{rel2}) was built up from the real and the
imaginary parts of (\ref{rel11}) and the Hilbert transforms between the
imaginary and the real parts of $\alpha (s)$ and of $\beta (s)$. This
mapping preserves the positivity of $\mathrm{Im}\,\alpha (s+i0)$: if $\chi
(s)$ is positive for $4\leq s\leq 16$, the image $\chi ^{\prime }(s)$ is
also a positive function. Given the background function $b(s,\nu )$ the
fixed point solution $\chi =T\left[ \chi ,b\right] $ leads to analytic
functions $\alpha (s)$ and $\beta (s)$, which satisfy the identity (\ref%
{rel11}). Thereby a ghost killing factor $\alpha (s)$ can be inserted into
the residue function $\beta (s)$ in order to cancel the pole of $%
R(s,t)+R(s,u)$ at angular momentum $l=0$, see Appendix \ref{pole}. Moreover
one can prescribe a strong decrease of $\beta (s)$ such that the crossed
Regge terms contribute only to the background.

The background is determined by a mapping $b\rightarrow b^{\prime }=\Upsilon %
\left[ \chi ,b\right] $, which incorporates elastic unitarity and crossing
for the background if the Regge term is given. The full non-linear mapping
is the Cartesian product $T\times \Upsilon $
\begin{equation}
\left(
\begin{array}{c}
\chi \\
b%
\end{array}%
\right) \rightarrow \left(
\begin{array}{c}
\chi ^{\prime } \\
b^{\prime }%
\end{array}%
\right) =\left(
\begin{array}{c}
T\left[ \chi ,b\right] \\
\Upsilon \left[ \chi ,b\right]%
\end{array}%
\right) .  \label{rel12}
\end{equation}%
A detailed norm estimate of all steps yields that the square of this mapping
is a contraction provided some norm restrictions are satisfied. The fixed
point solution can therefore be obtained by iteration.

In \cite{Kupsch:1977} the additive function $G(s,t)$ in (\ref{rel3}) is
omitted, and the role of the inhomogeneous term is taken over by an additive
(sufficiently small) constant $\tau >0$ in the dispersion relation for $%
\beta (s)$.The fixed point solution depends continuously on this constant.
The norm restrictions imply that $\chi $ and $\beta $ are small, but they
are strictly positive (at least for $4\leq s\leq 16$) if $\tau >0$. The
shape of the trajectory $\alpha (s)$ is essentially determined by the
function $\alpha _{0}(s)$, which can be chosen within the constraints given
above. The fixed point solution of (\ref{rel12}) depends continuously on $%
\alpha _{0}(s)$. The intercept of the trajectory is%
\begin{equation}
\alpha (0)=\alpha _{0}(0)-\frac{4}{\pi }\int_{4}^{18}x^{-\frac{3}{2}%
}(x-4)^{\sigma -\frac{1}{2}}\chi (x)dx<\alpha _{0}(0).  \label{rel13}
\end{equation}%
If we start with an input function $\alpha _{0}(s)$, which has an intercept $%
\alpha _{0}(0)=1$, the intercept of $\alpha $ is smaller, $\alpha (0)<1$.
But we can start with a function $\alpha _{0}(s)$ which has an intercept $%
\alpha _{0}(0)=1+\delta ,\,0<\delta <1$. Choosing a small parameter $\tau >0$
fixed point solutions with an arbitrarily small $\sup_{s}\chi (s)$ exist.
Hence there are solutions $\chi (s)$ such that $\alpha (0)>1$. The existence
of amplitudes (\ref{rel3}) with a Regge trajectory $\alpha (s)$ which has
exactly the intercept $\alpha (0)=1$ is then the consequence of the
continuity of the fixed point solution with respect to the input function $%
\alpha _{0}(s)$.

If $\alpha (0)\leq 1$ the partial wave amplitudes of $R(s,t)+R(s,u)$ are
uniformly bounded. One can achieve that the residue function $\beta (s)$
decreases fast enough such that crossed terms do not contribute to the
leading asymptotic behaviour, and the partial wave amplitudes of the full
amplitude (\ref{rel3}) are also uniformly bounded. This bound is a
continuous function of the norm of the amplitude. Since we can obtain
solutions with arbitrarily small norm, solutions with partial wave
amplitudes bounded by unity exist.

\subsection{Both constraints: elastic and inelastic unitarity\label%
{R-unitary}}

The inelastic unitarity inequalities (\ref{int9}) include the weaker
constraint of positivity:

\begin{itemize}
\item The amplitude $A(s,t)$ satisfies \textit{positivity} if $\mathrm{Im}%
\,a_{l}(s)\geq 0$ for $l=0,1,2,...$ and all energies $s\geq 4$.
\end{itemize}

Up to now there is no conclusive argument that the fixed point equations of
section \ref{R-fixedpoint} have solutions, which satisfy positivity. A
fortiori there is no proof that the fixed point equations have solutions,
which satisfy the full inelastic unitarity inequalities.

We give some details for a better understanding of this remaining problem.
From Mahoux and Martin \cite{Mahoux/Martin:1964} we know a consequence of
elastic unitarity: the double spectral function has to be positive in a
(precisely defined) neighbourhood of the boundary of its support. This fact
implies that the imaginary parts $\mathrm{Im}\,a_{l}(s)$ are positive for
sufficiently large angular momenta, $l\geq \Lambda (s)$, and energies $s>20$%
. Thereby $\Lambda (s)$ is an unknown function with the range $0\leq \Lambda
(s)<\infty $.

One can add an inhomogeneous term $G(s,t)$ to the Regge term $R(s,t)$, see (%
\ref{rel3}). This term should be given by an unsubtracted crossing symmetric
Mandelstam integral with positive double spectral function. Borrowing some
arguments from Appendix \ref{crossing} it is possible to obtain solutions,
which satisfy crossing symmetry, elastic unitarity and positivity within
some finite range of energy.

But the proof of the positivity for the partial wave amplitudes with $%
l<\Lambda (s)$ at higher energies remains open. The problem originates from
the partial wave amplitudes of the crossed Regge term $R(t,s)$, which
dominates the $s$-channel asymptotics, $\mathrm{Abs}_{s}R(t,s)\simeq \beta
(t)s^{\alpha (t)}$ if $s\rightarrow \infty $. Unfortunately one cannot use
arguments from section \ref{R-inel}. There the inelastic inequalities (\ref%
{int9}) have been derived under the condition that $\mathrm{Im}\,\alpha
(t+i0)\geq 0$ and that $\beta (t)$ has positive partial wave coefficients.
But this property does not hold for solutions of equation (\ref{rel11}). To
see that we first derive

\noindent \textbf{Lemma 1}\ Let $\alpha (t)$ and $\beta (t)$ be real
analytic functions which satisfy the identity (\ref{rel11}). If $\mathrm{Im}%
\,\alpha (t+i0)=c_{1}(t-4)^{\sigma +\frac{1}{2}}+\mathcal{O}\left(
(t-4)^{\sigma +\frac{1}{2}+\mu }\right) $ for $t\rightarrow 4$ with $%
c_{1}>0,\,\sigma =\alpha (4)>0$ and $\mu >0$, then near threshold the
imaginary part of $\beta (t+i0)$ has the form%
\begin{equation}
\mathrm{Im}\,\beta (t+i0)=c_{2}\,(t-4)^{\sigma +\frac{1}{2}}\log (t-4)+%
\mathcal{O}\left( (t-4)^{\sigma +\frac{1}{2}}\right) ,  \label{rel14}
\end{equation}%
where $c_{2}$ is a positive constant.

\textbf{Proof}\ The functions $\mathbb{R}\ni t\rightarrow \alpha (t+i0)$ and
$\mathbb{R}\ni t\rightarrow b(t-i0,\alpha (t+i0))$ are H\"{o}lder continuous
with index $\mu \in \left( 0,\frac{1}{2}\right) $. The relation (\ref{rel11}%
) then implies
\begin{equation}
\mathrm{Im}\,\beta (t+i0)=c_{1}\,(t-4)^{\sigma -\mathrm{Re}\alpha
(t+i0)}\sin \left( \log (t-4)\cdot \mathrm{Im}\,\alpha \right) +\mathcal{O}%
\left( (t-4)^{\sigma +\frac{1}{2}}\right) ,  \label{rel15}
\end{equation}%
and (\ref{rel14}) follows from the threshold behaviour of $\alpha (t)$.
\hfill $\Box $ \smallskip

\begin{remark}
The threshold behaviour (\ref{rel14}) can be derived from equation (12) of
\cite{Barut/Zwanziger:1962}. The statement of Lemma 1 is therefore a general
consequence of elastic unitarity.
\end{remark}

For $4<t<5$ the logarithm is negative and diverges to minus infinity if $%
t\rightarrow 4$. Therefore the first term dominates near $t=4$. Since we
assume that $\mathrm{Im}\,\alpha (t+i0)$ is positive, there is an interval $%
\left( 4,t_{1}\right) ,\,t_{1}>4$, such that $\mathrm{Im}\,\beta (t+i0)$ is
strictly negative for $4<t<t_{1}$. The inequality $\mathrm{Im}\,\beta <0$
near threshold implies that the partial wave coefficients of $\beta $ are
negative for large angular momenta. Hence the techniques of section \ref%
{R-inel} cannot be applied to prove the inequalities (\ref{int9}) for the
fixed point solutions.

Despite of this negative statement there remains a chance to derive these
inequalities. Since the amplitude (\ref{rel3}) satisfies elastic unitarity,
there exist counterterms which modify the amplitude near the $t$-channel
threshold $t=4$ in order to achieve the correct support of the double
spectral function $\rho (s,t)$. Moreover the double spectral function has to
be positive in the Mahoux-Martin domain. The problems with $\mathrm{Abs}%
_{s}R(t,s)\simeq \beta (t)s^{\alpha (t)}$ arise exactly in a region, where
such compensations take place. With some more efforts it should be possible
to prove the existence of a solution of the fixed point equations of section %
\ref{R-fixedpoint}, which satisfies positivity for all angular momenta and
all energies. There is even a chance to find a Regge amplitude which
satisfies all constraints of section \ref{not} and which has an
asymptotically constant total cross section. But to obtain amplitudes with
increasing total cross sections one has to develop new techniques.

\appendix

\section{Estimates of partial wave amplitudes\label{PW}}

The estimates of this Appendix are based on calculations in \cite%
{Kupsch:1971, Kupsch:1982}.

\subsection{General statements\label{PW1}}

The Legendre functions of second kind $Q_{l}(z),\,l=0,1,2,...$, satisfy the
relations
\begin{equation}
\begin{array}{c}
0<Q_{l+1}(z)<Q_{l}(z)<z^{-l}Q_{0}(z)\leq Q_{0}(z)=\frac{1}{2}\log \frac{z+1}{%
z-1},\,z>1, \\
Q_{l}(z+x)<\frac{z}{z+x}Q_{l}(z)<Q_{l}(z),\;z>1,\,x>0, \\
\left( Q_{l}(z)\right) ^{2}\leq 3zQ_{0}(z)Q_{l}(2z^{2}-1),\;z>1, \\
\lim_{z\rightarrow 1+0}\;Q_{l}(z)/Q_{0}(z)=2^{-1}.%
\end{array}
\label{pw3}
\end{equation}%
If $z=1+2(s-4)^{-1}(t+v),\,s>4,\,t>0,\,v\geq 0$, then $%
2z^{2}-1>1+2(s-4)^{-1}(t+\tau _{v}(s))+16v(s-4)^{-2}t$ follows with
\begin{equation}
\tau _{v}(s):=4v+4\frac{v^{2}}{s-4}.  \label{pw4}
\end{equation}%
The function $t=\tau _{4}(s),s>4$, is the boundary of the double spectral
domain as determined by elastic unitarity. To evaluate the partial wave
amplitudes it is convenient to introduce the following functions $\Phi
_{l}^{\mu }(v;s),\,l=0,1,2,...,\,s\geq 4$, depending on the parameters $v\in %
\left[ 4,\infty \right) $ and $\mu >0$%
\begin{equation}
\begin{array}{l}
\Phi _{l}^{\mu }(v;s)=0\quad \mathrm{if}\;s=4, \\
\Phi _{l}^{\mu }(v;s)=\int_{v}^{v+1}(t-v)^{\mu }Q_{l}\left( 1+\frac{2t}{s-4}%
\right) dt=\int_{0}^{1}t^{\mu }Q_{l}\left( 1+2\frac{t+v}{s-4}\right) dt\quad
\mathrm{if}\;s>4.%
\end{array}
\label{a8}
\end{equation}%
As a consequence of (\ref{pw3}) these functions satisfy the relations%
\begin{equation}
0\leq \Phi _{l+1}^{\mu }(w;s)<\Phi _{l}^{\mu }(w;s)<\Phi _{l}^{\mu
}(v;s)\quad \mathrm{if}\;w>v\geq 4\,\mathrm{and}\,s>4.  \label{pw2}
\end{equation}%
The last of these inequalities has the following generalization: If $w>v\geq
4$, then for any pair of parameters $\mu >0,\,\sigma >0$ there is a constant
$c_{\mu \sigma }<\infty $ such that%
\begin{equation}
\Phi _{l}^{\mu }(w;s)<c_{\mu \sigma }\Phi _{l}^{\sigma }(v;s).  \label{pw2a}
\end{equation}%
For $s$ near threshold we have%
\begin{equation}
\Phi _{l}(v;s)\sim (s-4)^{l+1}\quad \mathrm{if}\;s\rightarrow 4;  \label{pw1}
\end{equation}%
and for large $s$ the functions $\Phi _{l}(v_{1};s)$ increase like $\log s$,
more precisely%
\begin{equation}
\lim_{s\rightarrow \infty }\frac{\Phi _{l}(v;s)}{\log s}=2^{-1}(1+\mu )^{-1}.
\label{pw1a}
\end{equation}

Using Schwarz' inequality we obtain $\left( \Phi _{l}(v;s)\right) ^{2}\leq
(1+\mu )^{-1}\int_{0}^{1}t^{\mu }Q_{l}^{2}\left( 1+2\frac{t+v}{s-4}\right)
dt $. Then the inequalities (\ref{pw3}) imply the following estimates for
the functions $\Phi _{l}$, uniformly in $l=0,1,2,...$,%
\begin{equation}
\left( \Phi _{l}(4;s)\right) ^{2}\leq c\cdot \frac{s-4}{s}\cdot \Phi
_{l}(20;s)\;\mathrm{if}\;4\leq s\leq 20,  \label{pw5}
\end{equation}%
or, more generally,%
\begin{equation}
\left( \Phi _{l}(v;s)\right) ^{2}\leq c\cdot \frac{s-4}{s}\log s\cdot \Phi
_{l}(\tau _{v}(w);s)\;\mathrm{if}\;4\leq s\leq w<\infty ,  \label{pw6}
\end{equation}%
and%
\begin{equation}
\left( \Phi _{l}(v;s)\right) ^{2}\leq c\cdot \frac{s-4}{s}\log s\cdot \Phi
_{l}(4v);s)\;\mathrm{if}\;s\geq 4.  \label{pw6a}
\end{equation}

The Froissart-Gribov integral (\ref{int7}) can be evaluated with the help of
the following Lemma.

\begin{lemma}
\label{FG}Let $h(t)$ be an integrable complex function on the interval $%
\left[ v,\infty \right) $ with the properties%
\[
\begin{array}{c}
\left\vert h(t)\right\vert \leq c_{1}\,(t-v)^{\mu }\;\mathrm{for}\;t\in %
\left[ v,v+1\right] \;\mathrm{with}\;\mu >0, \\
\left\vert h(t)\right\vert \leq c_{2}\,t^{\alpha }(\log t)^{-\delta }\;%
\mathrm{with}\;\alpha >-1\;\mathrm{and}\;\delta \geq 0,%
\end{array}%
\]%
where $c_{1,2}$ are positive constants, then there exists a constant $c>0$
such that
\begin{equation}
\left\vert \int_{v}^{\infty }h(t)Q_{l}\left( 1+\frac{2t}{s-4}\right)
dt\right\vert \leq \left\{
\begin{array}{c}
c\cdot s^{\alpha +1}(\log s)^{-\delta }\Phi _{l}^{\mu }(v;s)\quad \mathrm{if}%
\;\alpha >-1 \\
c\cdot \Phi _{l}^{\mu }(v;s)\quad \mathrm{if}\;\alpha =-1\;\mathrm{and}%
\;\delta >1%
\end{array}%
\right.  \label{pw7}
\end{equation}%
is valid for all $s>4$ and $l\geq \max \left\{ 0,\alpha \right\} $.
\end{lemma}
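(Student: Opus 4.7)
The plan is to split the integration domain at $t=v+1$ and treat the two pieces with different tools: the contribution from $[v,v+1]$ is absorbed directly into $\Phi_l^\mu(v;s)$ using the near-threshold Hölder bound on $h$, while the contribution from $[v+1,\infty)$ is controlled by the polynomial bound on $h$ together with the monotonicity and decay properties of $Q_l$ collected in (\ref{pw3}).

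First I would handle the near-threshold piece. Using $|h(t)|\leq c_1(t-v)^\mu$ on $[v,v+1]$,
\[
\left|\int_v^{v+1} h(t)\,Q_l\!\left(1+\frac{2t}{s-4}\right)dt\right| \leq c_1 \int_v^{v+1}(t-v)^\mu Q_l\!\left(1+\frac{2t}{s-4}\right)dt = c_1\,\Phi_l^\mu(v;s)
\]
follows immediately from the definition (\ref{a8}), so this piece is absorbed into the right-hand side at no further cost.

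For the tail $I_{\mathrm{tail}}:=\int_{v+1}^\infty h(t)\,Q_l(1+2t/(s-4))\,dt$, I would apply $|h(t)|\leq c_2\, t^\alpha(\log t)^{-\delta}$ and split further at $t=(s-4)/2$, the point where the argument of $Q_l$ crosses $z=2$. On the inner segment $[v+1,(s-4)/2]$, where $z\in[1,2]$, I would use the bound $Q_l(z)\leq Q_0(z)=\frac{1}{2}\log(1+(s-4)/t)$ from (\ref{pw3}); direct integration yields a contribution of order $s^{\alpha+1}(\log s)^{1-\delta}$ when $\alpha>-1$, and of order $\log s$ when $\alpha=-1$ and $\delta>1$. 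On the outer segment $[(s-4)/2,\infty)$ I would use the tail asymptotic $Q_l(z)\leq c\,z^{-l-1}$; together with the hypothesis $l>\alpha$ (which is what $l\geq\max\{0,\alpha\}$ gives for integer $l$) this produces a convergent integral of order $s^{\alpha+1}(\log s)^{-\delta}$.

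Finally I would match the tail bound against the target $s^{\alpha+1}(\log s)^{-\delta}\Phi_l^\mu(v;s)$ (respectively $\Phi_l^\mu(v;s)$). The match at large $s$ is automatic since $\Phi_l^\mu(v;s)\sim(\log s)/(2(1+\mu))$ by (\ref{pw1a}) supplies exactly the extra factor $\log s$ needed to absorb the inner-segment contribution. The more delicate matching is near threshold: as $s\to 4^+$, the pointwise behaviour $Q_l(1+2t/(s-4))\sim c_l((s-4)/(2t))^{l+1}$ forces $I_{\mathrm{tail}}=O((s-4)^{l+1})$ (the integral $\int_{v+1}^\infty t^{\alpha-l-1}(\log t)^{-\delta}dt$ converges by $l>\alpha$), which matches $\Phi_l^\mu(v;s)\sim(s-4)^{l+1}$ from (\ref{pw1}). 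The main obstacle will be keeping all estimates uniform in $l$ and $s$ simultaneously, since the $l$-dependent constants implicit in $Q_l(z)\sim c_l z^{-l-1}$ must be controlled; I would address this by replacing crude asymptotic bounds with the sharper monotonicity inequality $Q_l(z+x)<\frac{z}{z+x}Q_l(z)$ from (\ref{pw3}), which permits comparison of $Q_l$ values at different arguments with no $l$-dependent loss.
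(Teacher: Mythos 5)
The paper does not prove this lemma itself -- it only points to Appendix B of \cite{Kupsch:1971} -- so the comparison below is against what the estimate actually requires rather than against a written proof. Your decomposition at $t=v+1$ and the identification of $\int_v^{v+1}(t-v)^\mu Q_l\,dt$ with $\Phi_l^\mu(v;s)$ are fine, and the near-threshold matching via $Q_l(z)\sim c_l z^{-l-1}$ does work out (the $l$-dependent constants cancel in the ratio because $\int_{v+1}^\infty t^{\alpha-l-1}dt$ is dominated by $t\approx v+1$). But there is a genuine gap on the inner segment $[v+1,(s-4)/2]$. Replacing $Q_l$ by $Q_0$ there produces a bound of order $s^{\alpha+1}(\log s)^{1-\delta}$ that is \emph{independent of} $l$, while the right-hand side of (\ref{pw7}) carries the factor $\Phi_l^\mu(v;s)$, which for fixed $s$ decays essentially geometrically in $l$ (roughly like $\bigl(z_0+\sqrt{z_0^2-1}\bigr)^{-l}$ with $z_0=1+2v/(s-4)$). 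Your absorption step invokes $\Phi_l^\mu(v;s)\sim(\log s)/(2(1+\mu))$ from (\ref{pw1a}), but that is a fixed-$l$ asymptotic as $s\to\infty$, not a lower bound uniform in $l$: for, say, $s=100$ and $l=50$ one has $\Phi_l^\mu(4;s)\approx 10^{-10}$ while your $Q_0$-bound on the inner segment is of order $10$. So no single constant $c$ closes the estimate for all $l$, and uniformity in $l$ is precisely the content of the lemma (it is what feeds into (\ref{pw22}) and the unitarity inequalities for all partial waves).

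You do flag uniformity in $l$ as "the main obstacle," but you locate it in the outer segment's asymptotic constants, whereas the outer piece is actually the harmless one; the fix you propose -- the $l$-independent comparison $Q_l(z+x)<\frac{z}{z+x}Q_l(z)$ from (\ref{pw3}) -- is indeed the right tool, but it must \emph{replace} the step $Q_l\le Q_0$ on the whole range $t\ge v+1$: averaging that inequality over base points $t_0\in[v,v+1]$ against the weight $(t_0-v)^\mu$ bounds $Q_l(1+2t/(s-4))$ directly by $(1+\mu)\frac{s-4+2(v+1)}{s-4+2t}\,\Phi_l^\mu(v;s)$, which carries the needed $l$-decay. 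Note, however, that a single application of this inequality yields only one power of $(s-4+2t)^{-1}$ and therefore closes the argument only for $\alpha<0$; for $\alpha\ge 0$ it has to be combined with $Q_l(z)<z^{-l}Q_0(z)$ and the restriction $l\ge\max\{0,\alpha\}$, again in an $l$-uniform fashion -- this is where the real work of Appendix B of \cite{Kupsch:1971} lies. A minor further inaccuracy: $l\ge\max\{0,\alpha\}$ does not give $l>\alpha$ when $\alpha$ is a nonnegative integer and $l=\alpha$, so the convergence of your outer integral is then borderline and needs the logarithmic factor (hence $\delta$) to be tracked.
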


The proof of this Lemma follows from Appendix B in \cite{Kupsch:1971}. Two
important estimates for the partial wave amplitudes are again formulated as
Lemmata.

\begin{lemma}
\label{eins}Let $F(s,t)$ be an amplitude which has the properties 1. -- 3.
and 6. of section \ref{not}. The cuts in $t$ and $u$ start at $t\geq
t_{1}\geq 4$ and $u\geq t_{1}$. Then the following statement is true: For
any finite energy $s_{1}\in \left[ 4,\infty \right) $ we can find constants $%
c_{1,2}\geq 0$ such that the partial wave amplitudes of $F$ are bounded by%
\begin{eqnarray}
\left\vert f_{l}(s)\right\vert &\leq &c_{1}\cdot \left( \frac{s-4}{s}\right)
^{-\frac{1}{2}}\Phi _{l}^{\mu }(t_{1};s),  \label{pw8a} \\
\left\vert \mathrm{Im}\,f_{l}(s)\right\vert &\leq &c_{2}\cdot \Phi _{l}^{\mu
}(t_{1};s)  \label{pw8b}
\end{eqnarray}%
and%
\begin{equation}
\mathrm{Im}\,f_{l}(s)-2\left\vert f_{l}(s)\right\vert ^{2}\geq -c_{3}\cdot
s^{\gamma }\Phi _{l}^{\mu }(t_{1};s)  \label{pw8c}
\end{equation}%
for $l=0,1,2,...$ and $4\leq s\leq s_{1}$.
\end{lemma}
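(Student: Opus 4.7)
The plan is to derive all three estimates from the Froissart--Gribov representation of the partial waves combined with the $Q_l$-bound of Lemma~\ref{FG}. For $l > n$ and $F$ analytic in the Mandelstam domain with cuts confined to $t \geq t_1$ and $u \geq t_1$, a standard contour deformation gives
\[
f_l(s) = \frac{1}{\pi}\,[s(s-4)]^{-\frac{1}{2}} \int_{t_1}^\infty \bigl[F_t(s+i0,t) + F_u(s+i0,t)\bigr]\, Q_l\!\left(1 + \frac{2t}{s-4}\right) dt .
\]
The H\"older continuity of the boundary values (condition 2) together with analyticity in $\mathbb{C}_{cut}^2$ supplies the threshold vanishing $|F_t(s+i0,t)| + |F_u(s+i0,t)| \leq c_0(t-t_1)^\mu$ on $t_1 \leq t \leq t_1+1$, matching the first hypothesis of Lemma~\ref{FG}, while the polynomial bound of condition 2 supplies the second hypothesis with $\alpha = n$, $\delta = 0$. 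The finitely many indices $l \leq n$ are treated separately via the direct formula (\ref{int6}).

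For (\ref{pw8a}) I insert the plain polynomial bound $|F_t|+|F_u| \leq C(1+|s|+|t|)^n$ into Lemma~\ref{FG}, obtaining $\int_{t_1}^\infty(|F_t|+|F_u|)Q_l\, dt \leq c\,s^{n+1}\Phi_l^\mu(t_1;s)$, so that $|f_l(s)| \leq c'\, s^n \sqrt{s/(s-4)}\,\Phi_l^\mu(t_1;s)$. On the bounded interval $4 \leq s \leq s_1$ the factor $s^n$ is absorbed into the constant, proving (\ref{pw8a}) for $l > n$. For $l \leq n$ the direct formula (\ref{int6}) with a uniform polynomial bound of $F(s+i0,t)$ on the physical interval $-s\leq t \leq 0$ yields a constant estimate that is dominated by $\sqrt{s/(s-4)}\,\Phi_l^\mu(t_1;s)$ over $4 \leq s \leq s_1$, since the latter captures the correct threshold behaviour $\sim (s-4)^{l+\frac{1}{2}}$ of the partial wave.

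The step to (\ref{pw8b}) exploits condition 6 decisively: the double spectral function $\rho(s,t) = \mathrm{Abs}_t F_s(s, t+i0)$, which enters the Froissart--Gribov representation of $\mathrm{Im}\, f_l(s)$ for $l > n$, inherits from condition 6 the sharper bound $|\rho(s,t)| \leq c\sqrt{(s-4)/s}\,(1+|s|+|t|)^n$. The factor $\sqrt{(s-4)/s}$ cancels the threshold singularity of the prefactor $[s(s-4)]^{-1/2}$, so Lemma~\ref{FG} yields
\[
|\mathrm{Im}\, f_l(s)| \leq c\, s^{-1} \cdot s^{n+1}\, \Phi_l^\mu(t_1;s) \leq c_2\, \Phi_l^\mu(t_1;s)
\]
on $4 \leq s \leq s_1$. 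The small-$l$ case is handled via (\ref{int6}) applied to $\mathrm{Im}\, F(s+i0,t) = F_s(s,t)$ together with the pointwise bound from condition 6.

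Finally, (\ref{pw8c}) is an algebraic consequence of the first two estimates, the squared-$\Phi$ estimate (\ref{pw6a}), and the monotonicity (\ref{pw2}):
\[
|f_l(s)|^2 \leq c_1^2\,\frac{s}{s-4}\,\bigl(\Phi_l^\mu(t_1;s)\bigr)^2 \leq C\, (\log s)\, \Phi_l^\mu(t_1;s),
\]
whence $\mathrm{Im}\, f_l(s) - 2|f_l(s)|^2 \geq -|\mathrm{Im}\, f_l(s)| - 2|f_l(s)|^2 \geq -(c_2 + 2C\log s)\,\Phi_l^\mu(t_1;s)$, which on the bounded interval $4 \leq s \leq s_1$ is dominated by $-c_3\,s^\gamma\,\Phi_l^\mu(t_1;s)$ for any $\gamma > 0$ once $c_3$ is chosen sufficiently large. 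The main technical hurdle in the whole argument is the precise justification of the H\"older threshold behaviour of $F_t$ and $\rho$ at $t = t_1$, which requires a careful extraction of the branch-point structure from the H\"older continuity of boundary values; once this is in place, the remaining steps reduce to routine applications of Lemma~\ref{FG} and of the $\Phi_l^\mu$-inequalities (\ref{pw2})--(\ref{pw6a}) of Appendix~\ref{PW1}.
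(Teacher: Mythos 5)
Your proposal follows essentially the same route as the paper: Froissart--Gribov plus Lemma~\ref{FG} for $l>n$, the direct integral (\ref{int6}) for the finitely many $l\leq n$, the extra threshold factor $\sqrt{(s-4)/s}$ from condition 6 for the imaginary part, and (\ref{pw6a}) with (\ref{pw2}) for the square in (\ref{pw8c}). The one place where your wording does not hold up is the $l\leq n$ case of (\ref{pw8a}): a ``constant estimate'' from a uniform polynomial bound on $F$ cannot be dominated by $\sqrt{s/(s-4)}\,\Phi_{l}^{\mu}(t_{1};s)\sim (s-4)^{l+\frac{1}{2}}$, which vanishes at threshold. What is actually needed -- and what the paper supplies -- is the genuine threshold behaviour $\left\vert f_{l}(s)\right\vert \leq c(l)\left(\frac{s-4}{s}\right)^{l+\frac{1}{2}}$, obtained by applying Rodrigues' formula in (\ref{int6}) and exploiting the smoothness of $F(s,t)$ in $t$ near $t=0$ (available because the $t$- and $u$-cuts start at $t_{1}\geq 4$); you invoke the right conclusion but do not derive it from the hypothesis you cite. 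With that step repaired, the rest of your argument is sound and coincides with the paper's proof.
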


\begin{proof}
By assumption the amplitude $F$ is polynomially bounded, and we can use the
Froissart-Gribov integral for sufficiently large angular momenta, say $l>n$,
for all energies $s\geq 4$. The partial wave amplitudes for $l\leq n$ are
calculated with the integral (\ref{int6}). Using the Rodrigues' formula the
integral (\ref{int6}) yields $\left\vert f_{l}(s)\right\vert \leq c(l)\cdot
\left( \frac{s-4}{s}\right) ^{l+\frac{1}{2}},\,l=0,1,2,..,\,s\geq 4$, with $%
l $-dependent constants $c(l)$. This result agrees with the threshold
behaviour of (\ref{pw8a}), see (\ref{pw1}). For the finite number of angular
momenta$\,l=0,1,...,n$, and the finite energy range $4\leq s\leq s_{1}$ the $%
l$-dependent constants $c(l)$ can be absorbed into (\ref{pw8a}). For $l>n$
and $4\leq s\leq s_{1}$ the estimate (\ref{pw7}) implies a bound $\left\vert
f_{l}(s)\right\vert \leq c_{1}\cdot \sqrt{\frac{s}{s-4}}\Phi _{l}^{\mu
}(t_{1};s)$ with $\mu $ being the H\"{o}lder index of the amplitude. Hence (%
\ref{pw8a}) is valid for $4\leq s\leq s_{1}$. For the imaginary part (\ref%
{pw8b}) we get the additional threshold factor $\sqrt{\frac{s-4}{s}}$ of the
absorptive part, see (\ref{int10}). The square $\left\vert
f_{l}(s)\right\vert ^{2}$ can be calculated with the help of (\ref{pw6a})
and (\ref{pw2}). That leads to the lower bound (\ref{pw8c}). \newline
\end{proof}

\begin{lemma}
\label{zwei}Let $F(s,t)$ be an amplitude which has the properties 1. -- 3.
and 6. of section \ref{not}.. The cuts in $t$ and $u$ start at $t\geq
t_{1}\geq 4$ and $u\geq t_{1}$. If $F$ is bounded by%
\begin{equation}
\left\vert F(s+i0,t)\right\vert \leq const\cdot \sum_{j=1,2}s^{\beta
_{j}}\left( 1+\left\vert t\right\vert \right) ^{\alpha _{j}}  \label{pw9}
\end{equation}%
for $s\geq s_{1}\geq 4$ with $\alpha _{j}>-1$ and $\beta _{j}\in \mathbb{R}$%
, then there exists a constant $c_{1}>0$ such that the partial wave
amplitudes $f_{l}$ of $F$ have the upper bound%
\begin{equation}
\left\vert f_{l}(s)\right\vert \leq c_{1}\cdot \sqrt{\frac{s}{s-4}}s^{\gamma
}\Phi _{l}^{\mu }(t_{1};s),\;l=0,1,2,...,\;s\geq 4,  \label{pw10}
\end{equation}%
with $\gamma =\max_{j}\left\{ \alpha _{j}+\beta _{j}\right\} $. If $\gamma
<0 $ and $0<\mu \leq \frac{1}{2}$ then there exists a constant $c_{2}>0$
such that the inequalities%
\begin{equation}
\mathrm{Im}\,f_{l}(s)-2\left\vert f_{l}(s)\right\vert ^{2}\geq -c_{2}\cdot
s^{\gamma }\Phi _{l}^{\mu }(t_{1};s),\;l=0,1,2,...,  \label{pw11}
\end{equation}%
are true for $s\geq 4$.
\end{lemma}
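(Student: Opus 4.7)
The plan is to split at $s_1$: on $[4,s_1]$, reduce to Lemma \ref{eins}, and on $(s_1,\infty)$ use the bound (\ref{pw9}) to estimate $f_l(s)$ separately for $l\le n$ (with $n$ the polynomial degree from (\ref{int2})) and for $l>n$. On $[4,s_1]$ both $s^\gamma$ and $s^{-\gamma}$ are pinched between positive constants, so the estimates (\ref{pw8a}) and (\ref{pw8c}) of Lemma \ref{eins} directly imply (\ref{pw10}) and (\ref{pw11}) after absorbing these factors.

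For $s>s_1$ and $l\le n$ I would compute $f_l(s)$ from (\ref{int6}) using $|P_l(z)|\le 1$ on $[-1,1]$ and the bound (\ref{pw9}); the elementary estimate $\int_{4-s}^{0}(1+|t|)^{\alpha_j}dt\le c\,s^{\alpha_j+1}$ (valid since $\alpha_j>-1$) gives $|f_l(s)|\le c\,s^\gamma\sqrt{s/(s-4)}$, and (\ref{pw1a}) supplies a uniform positive lower bound on $\Phi_l^\mu(t_1;s)$ for all $l\in\{0,\dots,n\}$ and $s\ge s_1$, yielding (\ref{pw10}) for these angular momenta. For $s>s_1$ and $l>n$ I would use the Froissart-Gribov integral (\ref{int7}), feeding Lemma \ref{FG} two complementary bounds on $|A_t(s+i0,t)|$: near threshold, $|A_t(s,t)|\le C_H(t-t_1)^\mu$ by Hölder continuity together with the vanishing $A_t(s,t_1)=0$ at the start of the $t$-cut; for large $t$, $|A_t(s,t)|\le 2\,\mathrm{const}\sum_j s^{\beta_j}(1+t)^{\alpha_j}$ from (\ref{pw9}) applied to both boundary values of $F$. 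Applied term by term with $v=t_1$, $\alpha=\alpha_j$, $\delta=0$, Lemma \ref{FG} gives $\int_{t_1}^{\infty}|A_t|Q_l\,dt\le c\,s^{\alpha_j+\beta_j+1}\Phi_l^\mu(t_1;s)$, and the prefactor $[s(s-4)]^{-1/2}$ in (\ref{int7}) produces the required $\sqrt{s/(s-4)}$.

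For the lower bound (\ref{pw11}) on $s>s_1$ (using $\gamma<0$), I would estimate $\mathrm{Im}\,f_l$ and $|f_l|^2$ separately. Since $|\mathrm{Im}\,A_t|\le|A_t|$, the same Froissart-Gribov argument yields $|\mathrm{Im}\,f_l(s)|\le c\sqrt{s/(s-4)}\,s^\gamma\Phi_l^\mu(t_1;s)$. Squaring (\ref{pw10}) and applying (\ref{pw6a}) followed by (\ref{pw2}) gives $|f_l|^2\le C\,s^{2\gamma}\log s\cdot\Phi_l^\mu(t_1;s)$; because $\gamma<0$ one has $s^{2\gamma}\log s\le C' s^\gamma$ past some threshold (possibly after enlarging $s_1$), so $|f_l|^2\le C''\,s^\gamma\Phi_l^\mu(t_1;s)$. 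Combining gives $\mathrm{Im}\,f_l-2|f_l|^2\ge -c_2\,s^\gamma\Phi_l^\mu(t_1;s)$.

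The main obstacle is the uniform behaviour of $A_t(s+i0,t)$ near $t=t_1$: property 2 of section \ref{not} furnishes Hölder continuity but does not explicitly quantify the $s$-dependence of the Hölder constant $C_H$. One must check that $C_H$ is either $s$-independent or grows no faster than $\sum_j s^{\beta_j}$, so that the constant produced by Lemma \ref{FG} absorbs cleanly into a single $c_1$. Once this is under control, everything else amounts to bookkeeping of thresholds and repeated use of (\ref{pw2}) and (\ref{pw6a}).
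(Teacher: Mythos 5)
Your argument is correct and follows essentially the same route as the paper: Lemma \ref{eins} for the finite energy range, the integral (\ref{int6}) for the uniform bound at low $l$, the Froissart--Gribov integral with Lemma \ref{FG} for $l>n$, and (\ref{pw6a}) with (\ref{pw2}) to control $\left\vert f_{l}\right\vert ^{2}$ in the lower bound. Your closing caveat about the $s$-uniformity of the H\"{o}lder constant near $t=t_{1}$ is a fine point the paper treats as implicit in properties 1.--3. and 6., so it does not signal a divergence of method.
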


\begin{proof}
By assumption the amplitude $F$ is polynomially bounded, and we can use the
Froissart-Gribov integral for sufficiently large angular momenta $l>n$ for
all energies. The partial wave amplitudes for $l\leq n$ are calculated with
the integral (\ref{int6}). \newline
For the finite energy range $4\leq s\leq \max \left\{ 5,s_{1}\right\} $ the
estimate (\ref{pw10}) follows from Lemma \ref{eins}. If $s\geq \max \left\{
5,\,s_{1}\right\} $ the integral (\ref{int6}) yields the uniform bound $%
\left\vert f_{l}(s)\right\vert \leq c_{1}\cdot s^{\gamma }$ for all $%
l=0,1,2,...$. This bound together with the estimate derived for the partial
wave amplitudes $f_{l}(s),\,l>n$, with Lemma \ref{FG} imply the upper bound (%
\ref{pw10}) for $s\geq \max \left\{ 5,\,s_{1}\right\} $.\newline
The imaginary parts of $f_{l}$ has the additional threshold factor $\sqrt{%
\frac{s-4}{s}}$ of the estimate (\ref{pw8b}). The square $\left\vert
f_{l}(s)\right\vert ^{2}$ can be calculated with the help of (\ref{pw6a})
and (\ref{pw2}). That leads to the lower bound (\ref{pw11}).
\end{proof}

\subsection{Amplitudes with positive spectral functions\label{PW2}}

Estimates of the partial wave amplitudes of a Mandelstam representation with
positive spectral functions are needed for the construction of amplitudes,
which satisfy the inelastic unitarity inequalities (\ref{int9}) for all
energies.

It is straightforward to derive a precise estimate of the partial wave
amplitudes of $A(s,t)=Sym\,F(s,t)$, where $F$ satisfies the Mandelstam
representations (\ref{a1}) or (\ref{a2}), if the spectral functions have the
following properties:

\begin{itemize}
\item The double spectral function $\rho (x,y)$ is H\"{o}lder continuous
with index $\mu \in \left( 0,\frac{1}{2}\right] $. It has the structure%
\begin{equation}
\begin{array}{c}
\rho (s,t)=\psi _{1}(s,t)+\psi _{2}(t,s)\quad \mathrm{with} \\
0\leq \psi _{1,2}(s,t)\leq c\cdot t^{\alpha }s^{-1}(\log t)^{-\delta }(\log
s)^{-\delta }%
\end{array}
\label{pw21}
\end{equation}%
with $-1<\alpha \leq 1,\,\delta >1$. The support of $\psi _{1,2}(s,t)$ lies
within $\left[ v,\infty \right) \times \left[ w,\infty \right) $ with $4\leq
v\leq w\leq 20$.
\end{itemize}

In the sequel we simply write $\psi $ instead of $\psi _{1}$ and $\psi _{2}$%
. In the case of the crossing symmetry (\ref{int3}) of neutral pions we
anyhow have $\psi _{1}(s,t)=\psi _{2}(s,t)$. But the following estimates are
also valid for isospin-1 pions. If $-1<\alpha \leq 0$ we can use the
unsubtracted Mandelstam representation (\ref{a1}). If $0<\alpha \leq 1$ we
have to take the subtracted Mandelstam representation (\ref{a2}) which has
an additional single spectral function with the properties:

\begin{itemize}
\item The single spectral function $\varphi (s)$ is H\"{o}lder continuous
with index $\mu \in \left( 0,\frac{1}{2}\right] $, positive and bounded by $%
s^{\alpha -1}\left( \log s\right) ^{-\delta },\,0<\alpha \leq 1,\,\delta >1$%
, for large $s$. It has a threshold behaviour $\varphi (s)\geq
c(s-v)^{\sigma }$ with $c>0$ for $v\leq s\leq v+1,\,v\geq 4$. The exponent
is $\sigma =\frac{1}{2}$ if $v=4$, and $\sigma =\mu \leq \frac{1}{2}$ if $%
v>4 $.
\end{itemize}

The Hilbert transform in (\ref{a1}) or (\ref{a2}) introduce additional $\log
s$ and $\log t$ factors. The partial wave (\ref{int6}) $a_{0}(s)$ has the
bound $\left\vert a_{0}(s)\right\vert \leq c\sqrt{\frac{s-4}{s}}s^{\alpha
-1}(\log s)^{-\delta +2}$. All higher partial wave amplitudes can be
estimated with the Froissart-Gribov integral (\ref{int7}) using Lemma \ref%
{FG}. The final result -- including all crossed terms -- is
\begin{equation}
\left\vert a_{l}(s)\right\vert \leq c\sqrt{\frac{s}{s-4}}s^{\alpha -1}(\log
s)^{-\delta +1}\Phi _{l}^{\mu }(v;s),\quad l=0,1,2,...,s>4.  \label{pw22}
\end{equation}%
In the case of one subtraction we have $0<\alpha \leq 1$, in the case
without subtraction the value of $\alpha $ is $-1<\alpha \leq 0$. The
relations (\ref{pw6}) and (\ref{pw6a}) imply the upper bounds%
\begin{equation}
\left\vert a_{l}(s)\right\vert ^{2}\leq \left\{
\begin{array}{l}
c\cdot \Phi _{l}^{\mu }(20;s)\quad \mathrm{if}\;4\leq s\leq 20, \\
c\cdot s^{2\alpha -2}(\log s)^{-2\delta +2}\Phi _{l}^{\mu }(4v;s)\quad
\mathrm{if}\;s\geq 4.%
\end{array}%
\right.  \label{pw23}
\end{equation}

For a lower bound on the imaginary part we need a more detailed knowledge
about the behaviour of the double spectral function near the boundary of its
support. If the support of $\psi (s,t)$ starts at $s=v$ and $t=w$, and $\psi
$ is bounded from below by%
\begin{equation}
\psi (s,t)\geq c\cdot \left( \frac{s-v}{s}\right) _{+}^{\sigma }\left( \frac{%
t-w}{t}\right) _{+}^{\mu }t^{\alpha }s^{-1}(\log t)^{-\delta }(\log
s)^{-\delta }\quad \mathrm{if\,either}\;v\leq s\leq v+1\;\mathrm{or}\;w\leq
t\leq w+1,  \label{pw24}
\end{equation}%
then the Froissart-Gribov integral implies that the imaginary parts of the
partial wave amplitudes with $l>0$ have the lower bound
\begin{equation}
\left( \frac{s-4}{s}\right) ^{\frac{1}{2}}\mathrm{Im}\,a_{l}(s)\geq
c_{1}\left( \frac{s-v}{s}\right) _{+}^{\sigma }s^{-2}(\log s)^{-\delta }\Phi
_{l}^{\mu }(w;s)+c_{2}\left( \frac{s-w}{s}\right) _{+}^{\mu }s^{\alpha
-1}(\log s)^{-\delta }\Phi _{l}^{\sigma }(v;s)  \label{pw25}
\end{equation}%
for $s\geq 4$. Here $c$ and $c_{1,2}$ are positive constants. This lower
bound is also correct for $l=0.$

Since $\Phi _{l}^{\mu }(v;s)\sim \log s$ for large $s$, the partial wave
amplitudes have the asymptotic behaviour $\left\vert \mathrm{Re}%
\,a_{l}(s)\right\vert \lesssim s^{\alpha -1}(\log s)^{-\delta +2}$ and $%
\mathrm{Im}\,a_{l}(s)\sim s^{\alpha -1}(\log s)^{-\delta +1}$. This
statement implies

\begin{corollary}
\label{delta} For increasing amplitudes with $\alpha =1$ the inequalities $%
\left\vert a_{l}(s)\right\vert ^{2}\leq \mathrm{Im}\,a_{l}(s)$ can be
derived only if $\delta \leq 3$. To obtain values $\delta <3$ one needs an
improved estimate for $\mathrm{Re}\,a_{l}(s)$.
\end{corollary}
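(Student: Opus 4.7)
The plan is to feed $\alpha =1$ into the general partial-wave estimates (\ref{pw22}) and (\ref{pw25}) established earlier in this appendix, read off the leading logarithmic asymptotics, and then compare $\left\vert a_{l}(s)\right\vert ^{2}$ with $\mathrm{Im}\,a_{l}(s)$ term by term.

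First I would specialize the upper bound (\ref{pw22}): taking $\alpha =1$ and using $\Phi _{l}^{\mu }(v;s)\sim (2(1+\mu ))^{-1}\log s$ from (\ref{pw1a}), one obtains $\left\vert a_{l}(s)\right\vert \lesssim (\log s)^{-\delta +2}$ for $s\rightarrow \infty $ and hence $\left\vert a_{l}(s)\right\vert ^{2}\lesssim (\log s)^{-2\delta +4}$. Next I would specialize the lower bound (\ref{pw25}) at $\alpha =1$; its second summand dominates for large $s$ and yields $\mathrm{Im}\,a_{l}(s)\gtrsim (\log s)^{-\delta +1}$. The inelastic inequality $\left\vert a_{l}(s)\right\vert ^{2}\leq \mathrm{Im}\,a_{l}(s)$ can then be valid at large $s$ only when the leading logarithmic exponents satisfy $-2\delta +4\leq -\delta +1$, i.e.\ $\delta \geq 3$. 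This is the first assertion of the Corollary, and it also shows that any attempt to reach $\delta <3$ must attack one of the two estimates that produced the bound.

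The main obstacle -- and the source of the whole restriction -- is the extra factor $\log s$ that separates the available bound on $\left\vert \mathrm{Re}\,a_{l}\right\vert $ from the bound on $\mathrm{Im}\,a_{l}$. This logarithm enters because (\ref{pw22}) estimates the full partial wave via a Hilbert transform of the absorptive part inserted into the double dispersion integrals (\ref{a1}) or (\ref{a2}): while $\mathrm{Im}\,a_{l}\sim (\log s)^{-\delta +1}$, the real part picks up one additional $\log s$ and behaves like $(\log s)^{-\delta +2}$. Within the present norm framework this logarithmic loss cannot be avoided; only a refinement that exploits cancellations between the $s$-channel and $u$-channel contributions to $A_{t}(s,t)$, of the type used for the non-unitary Regge amplitudes of section \ref{R-inel}, could bring the bound on $\left\vert \mathrm{Re}\,a_{l}\right\vert $ down to the same $(\log s)^{-\delta +1}$ decay as $\mathrm{Im}\,a_{l}$ and thereby open access to values $\delta <3$.
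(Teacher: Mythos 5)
Your argument is correct and follows the paper's own route exactly: the sentence preceding the Corollary simply specializes (\ref{pw22}) and (\ref{pw25}) via $\Phi_{l}^{\mu}(v;s)\sim\log s$ to obtain $\left\vert \mathrm{Re}\,a_{l}\right\vert \lesssim (\log s)^{-\delta+2}$ and $\mathrm{Im}\,a_{l}\sim(\log s)^{-\delta+1}$ at $\alpha=1$, and the exponent comparison $-2\delta+4\leq-\delta+1$ gives $\delta\geq 3$ with the real-part estimate identified as the bottleneck, just as you say. Note only that the Corollary as printed reads ``$\delta\leq 3$'', evidently a misprint for the condition $\delta\geq 3$ that you (in agreement with section \ref{dsf} of the main text) correctly derive.
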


Now we take thresholds $v=4$ and $4\leq w\leq 20$ and choose parameters $%
\alpha <1$ and $\delta >1$ or $\alpha =1$ and $\delta \geq 3$. The estimates
(\ref{pw23}) and (\ref{pw25}) imply that the partial wave amplitudes satisfy
the inequalities%
\begin{equation}
\lambda \cdot \mathrm{Im}\,\,a_{l}(s)-\left\vert a_{l}(s)\right\vert
^{2}\geq 0,\;s\geq 4,\,l=0,1,2,...  \label{pw27}
\end{equation}%
if the multiplier $\lambda >0$ is large enough. Then the partial wave
amplitudes of $\lambda ^{-1}A(s,t)$ fulfil the unitarity relations (\ref%
{int9}) for $s\geq 4$. Choosing a larger value of $\lambda $ we get%
\begin{eqnarray}
\lambda \cdot \mathrm{Im}\,a_{l}(s)-\left\vert a_{l}(s)\right\vert ^{2}
&\geq &\left\{
\begin{array}{c}
c_{1}\cdot \Phi _{l}^{\mu }(w;s)\quad \mathrm{if}\;4\leq s\leq w+1, \\
c_{2}\cdot s^{\alpha -1}(\log s)^{-\delta }\Phi _{l}^{\mu }(v;s)\quad
\mathrm{if}\;s\geq v+1%
\end{array}%
\right.  \label{pw28} \\
&\geq &c\cdot s^{\alpha -1}(\log s)^{-\delta }\Phi _{l}^{\mu }(w;s)\quad
\mathrm{if}\;s\geq 4.  \label{pw29}
\end{eqnarray}%
for $l=0,1,2,..$ with constants $c_{1,2}>0$ and $c>0$ such that the partial
wave amplitudes of $(2\lambda )^{-1}A(s,t)$ satisfy the relations%
\begin{equation}
\mathrm{Im}\,a_{l}(s)-2\left\vert a_{l}(s)\right\vert ^{2}\geq c_{0}\cdot
\left( \frac{s-4}{s}\right) ^{\mu -\frac{1}{2}}s^{\alpha -1}(\log
s)^{-\delta }\Phi _{l}^{\mu }(w;s)  \label{pw31}
\end{equation}%
for $s\geq 4$ and $l=0,1,2,...$, where $c_{0}>0$ is a strictly positive
constant.

\begin{remark}
\label{unsubt}In the case of unsubtracted amplitudes we can write down
amplitudes with partial wave amplitudes, which behave like (\ref{pw22}) and (%
\ref{pw25}) for all $l=0,1,2,...$, where the parameter $\alpha $ lies in the
interval $-2<\alpha <0$. See the end of section 2.2 of \cite{Kupsch:1971}.
\end{remark}

The double spectral function $\omega (s,t)$ of the inhomogeneous part $%
B(s,t) $ in section \ref{fixedpoint} has thresholds at $v=w=16$. If $\omega
(s,t)$ has a lower bound (\ref{pw24}) (with $\alpha \leq 0$ in the case
without subtraction), the imaginary parts of the partial wave amplitudes $%
b_{l}(s)$ of $B$ have the lower bounds%
\begin{equation}
\mathrm{Im}\,\,b_{l}(s)\geq c_{1}\left( \frac{s-16}{s}\right) _{+}^{\mu
}s^{\alpha -1}(\log s)^{-\delta }\Phi _{l}^{\mu }(16;s)\;\mathrm{if}\;s\geq
4,  \label{pw32}
\end{equation}%
and they satisfy the inequalities
\begin{equation}
\mathrm{Im}\,b_{l}(s)\geq c_{2}\left\vert b_{l}(s)\right\vert ^{2}\quad
\mathrm{if}\;s\geq 17.  \label{pw33}
\end{equation}%
These inequalities are sufficient to prove that -- after an appropriate
scaling of the inhomogeneous part $\omega $ -- the fixed point problem of
section \ref{fixedpoint} yields amplitudes, which also satisfy the unitarity
inequalities (\ref{int9}) for all energies $s\geq 16$.

\section{Calculations for the Regge pole amplitudes\label{R}}

\subsection{Functions with positive Legendre/Taylor coefficients\label%
{PWpositive}}

In section \ref{lin} the set $\mathcal{A}(s_{1}),\,s_{1}\geq 4$, has been
defined as the set of\textit{\ }functions $\phi (s,t)$, which have positive
coefficients in their Legendre (partial wave) expansion for all energies $%
s\geq s_{1}$. In this Appendix this set will be called $\mathcal{A}%
_{1}(s_{1})$. For further calculations it is useful to define a subclass $%
\mathcal{A}_{2}(s_{1})\subset \mathcal{A}_{1}(s_{1})$, which is
characterized by positive coefficients in the Taylor expansion with respect
to the variable $z=1+2(s-4)^{-1}t$.

A function $\phi (s,t)$ is an element of the set $\mathcal{A}%
_{2}(s_{1}),~s_{1}\geq 4$, if it has the following properties:

\begin{enumerate}
\item The function $\phi (s,t)$ is H\"{o}lder continuous in $s\geq s_{1}\geq
4$ and holomorphic in the variable $t$ in the cut plane $\mathbb{C}_{cut}$.

\item The function $\phi (s,t)$ is real if $s\geq s_{1}$ and $-s<t<4$.

\item The power series expansion $\phi (s,t)=\sum c_{n}(s)x^{n}$ with the
variable $x=\frac{s-4}{2}z=t+\frac{s-4}{2}$ has positive coefficients $%
c_{n}(s)\geq 0$ for $n=0,1,2,...$ and $s\geq s_{1}$.
\end{enumerate}

In \cite{Kupsch:1971} the class $\mathcal{A}_{2}$ has been called $\mathcal{A%
}^{\prime }$. Since $P_{n}(z)=\sum_{k=0}^{n}a_{k}z^{k}$ with positive
coefficients $a_{k}$ we have $\mathcal{A}_{2}(s_{1})\subset \mathcal{A}%
_{1}(s_{1})$; and the inclusions $\mathcal{A}_{1,2}(s_{2})\subset \mathcal{A}%
_{1,2}(s_{1})$ if $s_{1}\leq s_{2}$ are obvious. The algebraic structures of
the spaces $\mathcal{A}_{1}(s_{1})$ and $\mathcal{A}_{2}(s_{1})$ are similar:

From $\phi _{1,2}(s,t)\in \mathcal{A}_{k},\,k=1,2$, we get
\begin{eqnarray}
\alpha \phi _{1}(s,t)+\beta \phi _{2}(s,t) &\in &\mathcal{A}_{k}\quad
\mathrm{if}\;\alpha ,\beta \geq 0  \label{p1} \\
\phi _{1}(s,t)\cdot \phi _{2}(s,t) &\in &\mathcal{A}_{k}\,,  \label{p2}
\end{eqnarray}%
and $\phi (s,t)\in \mathcal{A}_{k}$ implies $\exp \left( \lambda \phi
(s,t)\right) \in \mathcal{A}_{k}$ for all parameters $\lambda \geq 0$.

Let $f(t)$ be a real function, which is $\mathcal{L}^{p}$-integrable over
the interval $4\leq t<\infty $ for some $p\in (1,\infty )$. Then the
analytic function $F(t)=\int_{4}^{\infty }f(t^{\prime })\left( t^{\prime
}-t\right) ^{-1}dt^{\prime }$ has a well defined power series expansions $%
F(t)=\sum_{n=0}^{\infty }c_{n}^{F}(s)x^{n}$ with coefficients%
\begin{equation}
c_{n}^{F}(s)=\frac{1}{\pi }\int_{4}^{\infty }f(t)\left( \frac{s-4}{2}%
+t\right) ^{-n-1}dt  \label{p5}
\end{equation}%
If $f(t)\geq 0$ then these coefficients are positive, and the function $F(t)$
is an element of $\mathcal{A}_{2}(4)$. As a consequence the trajectory
function (\ref{r5}) $\alpha (t)$ is an element of $\mathcal{A}_{2}(4)$ if $%
\alpha (\infty )\geq 0$. In the general case we have $\alpha (t)=\alpha
_{0}+\alpha _{1}(t)$ with $\alpha _{0}\leq 0$ and $\alpha _{1}(t)\in
\mathcal{A}_{2}(4)$.

By power series expansion we obtain%
\begin{equation}
-\log (t_{1}-t)+c\log (u_{1}-u)\in \mathcal{A}_{2}(4)  \label{p7}
\end{equation}%
where $u=4-s-t$, and the parameters are restricted to $4\leq t_{1}\leq u_{1}$
and $-1\leq c\leq 1$. As a consequence of (\ref{p7}) and the rules (\ref{p1}%
) and (\ref{p2}) we obtain the following examples, which are needed for
section \ref{risingR}. Thereby the function $\gamma _{1}(t)$ is an element
of $\mathcal{A}_{2}(4)$, and $\gamma (t)$ is the sum $\gamma (t)=-\delta
+\gamma _{1}(t)$ with a constant $\delta \geq 0$.
\begin{eqnarray*}
&&(s-s_{1})^{\gamma (t)}=(s-s_{1})^{-\delta }\exp \left( \gamma _{1}(t)\log
(s-s_{1})\right) \in \mathcal{A}_{2}(s_{1}+2),\,s_{1}\geq 4, \\
&&(t_{1}-t)^{-2\delta -\gamma (t)}=\exp \left( -(\delta +\gamma
_{1}(t))(\log (t_{1}-t))\right) \in \mathcal{A}_{2}(4), \\
&&(t_{1}-t)^{-\delta }(u_{1}-u)^{-\delta }\in \mathcal{A}_{2}(4), \\
&&(t_{1}-t)^{-\gamma _{1}(t)}(u_{1}-u)^{\gamma _{1}(t)}\in \mathcal{A}%
_{2}(4), \\
&&(t_{1}-t)^{-\gamma (t)-2\delta }(u_{1}-u)^{\gamma (t)}\in \mathcal{A}%
_{2}(4).
\end{eqnarray*}%
These results can be extended to integrals with positive weight functions.
Let $\sigma (t)\geq 0$ be a positive integrable function with support inside
the interval $17\leq t\leq 18$. Then we have $\int \sigma (s^{\prime
})(s-s^{\prime })^{\gamma (t)}ds^{\prime }\in \mathcal{A}_{2}(20)$ and%
\begin{equation}
N_{1}(s,t):=\int \int dt^{\prime }du^{\prime }\sigma (t^{\prime })\sigma
(u^{\prime })(t^{\prime }-t-1)^{-\gamma (t)-2\delta }(u^{\prime
}-4+s+t)^{\gamma (t)}\in \mathcal{A}_{2}(4).  \label{p9}
\end{equation}%
This result has an important consequence for section \ref{risingR}. If we
factorize the residue function into $\beta (t)=\beta _{1}(t)\cdot \beta
_{0}(t)$ with $\beta _{0}(t)=\int dt^{\prime }\sigma (t^{\prime })(t^{\prime
}-t-1)^{-\gamma (t)-2\delta }\in \mathcal{A}_{2}(4)$, then the imaginary
part (\ref{r8}) of the Regge ansatz is the product%
\begin{equation}
\mathrm{Im}\,\hat{R}(s+i0,u)=\beta _{1}(t)\cdot \int ds^{\prime }\sigma
(s^{\prime })(s-s^{\prime })_{+}^{\gamma (t)}\cdot N_{1}(s,t).  \label{p10}
\end{equation}%
Hence choosing $\beta _{1}(t)$ as element of $\mathcal{A}_{2}(4)$ or of the
larger class $\mathcal{A}_{1}(4)$ the product (\ref{p10}) is an element of $%
\mathcal{A}_{1}(20)$ as stated in section \ref{risingR}.

\begin{remark}
If the trajectory $\alpha (t)$ enters the half plane $\left\{ l\mid \mathrm{%
Re}\,l<0\right\} $ for $t<0$, the factor $\left( \sin \pi \gamma (t)\right)
^{-1}$ in (\ref{r4}) has poles in the physical region, and one needs zeros
of $\beta _{1}(t)$ to compensate these poles. Such functions $\beta _{1}(t)$
do not exist within $\mathcal{A}_{2}(4)$ but in the larger class $\mathcal{A}%
_{1}(4)$, see Appendix D of \cite{Kupsch:1971}. For the proof of the
unitarity inequalities for the amplitudes, which saturate the Froissart
bound, one has also to work with functions of the class $\mathcal{A}_{1}$,
see \cite{Kupsch/Pool:1979,Kupsch:1982}.
\end{remark}

For the comparison of real analytic functions, which are defined by
dispersion integrals, the following Lemma is useful.

\begin{lemma}
\label{compare}Let $f(t)$ and $g(t)$ be two real functions on the interval $%
4\leq t<\infty $, which are $\mathcal{L}^{p}$-integrable with $1<p<\infty $
. Assume that these functions satisfy the following restrictions:\newline
a) The function $f(t)$ is positive, $f(t)\geq 0$, and the threshold $t=4$
belongs to the support of $f(t)$. \newline
b) There exists a constant $c_{1}>0$ such that $\left\vert g(t)\right\vert
\leq c_{1}f(t)$ in an interval $4\leq t\leq t_{1}$ and for large $%
t,\,t>t_{2}\geq t_{1}$. \newline
Then there exists a constant $c_{2}\geq 0$ such that the power series
coefficients (\ref{p5}) of $F(t)=\int_{4}^{\infty }f(t^{\prime })\left(
t^{\prime }-t\right) ^{-1}dt^{\prime }$ and $G(t)=\int_{4}^{\infty
}g(t^{\prime })\left( t^{\prime }-t\right) ^{-1}dt^{\prime }$ satisfy $%
\left\vert c_{n}^{G}(s)\right\vert \leq c_{2}\cdot c_{n}^{F}(s)$ for $%
n=0,1,2,...$ and $s>4$.
\end{lemma}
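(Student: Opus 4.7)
The plan is to view both $c_n^F(s)$ and $c_n^G(s)$ as integrals against the common positive weight $w_n(t) := \left(\frac{s-4}{2}+t\right)^{-n-1}$, which for every fixed $s>4$ and every $n\geq 0$ is monotonically decreasing in $t$. Hypothesis b) naturally suggests splitting the integral defining $c_n^G(s)$ over the partition $[4,t_1]\cup(t_1,t_2)\cup[t_2,\infty)$. On the two outer pieces the pointwise bound $|g(t)|\leq c_1 f(t)$ immediately gives $\frac{1}{\pi}\int_{[4,t_1]\cup[t_2,\infty)}|g(t)|w_n(t)\,dt \leq c_1\, c_n^F(s)$, so the real work lies in the middle interval.

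On $(t_1,t_2)$ no pointwise bound is available, but monotonicity of $w_n$ yields $w_n(t)\leq w_n(t_1)$ there, and H\"older's inequality applied to the bounded interval shows that $C := \int_{t_1}^{t_2}|g(t)|\,dt$ is a finite constant depending neither on $n$ nor on $s$. Hence the middle contribution to $|c_n^G(s)|$ is bounded by $\frac{C}{\pi}w_n(t_1)$. The next step is to dominate $w_n(t_1)$ by a multiple of $c_n^F(s)$ uniformly in $n$ and $s$, and here hypothesis a) enters: since $t=4$ lies in the support of the nonnegative function $f$, one can choose $\tau_0\in(4,t_1)$ with $K_0 := \int_4^{\tau_0} f(t)\,dt>0$, and monotonicity of $w_n$ once more gives $c_n^F(s)\geq \frac{1}{\pi}\int_4^{\tau_0}f(t)w_n(t)\,dt \geq \frac{K_0}{\pi}w_n(\tau_0)\geq\frac{K_0}{\pi}w_n(t_1)$. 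Rearranging yields $w_n(t_1)\leq \pi c_n^F(s)/K_0$, so the middle contribution is at most $(C/K_0)\,c_n^F(s)$.

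Combining the three estimates produces $|c_n^G(s)|\leq (c_1+C/K_0)\,c_n^F(s)$, so $c_2:=c_1+C/K_0$ is the desired constant. The only real obstacle is the intermediate region, where no pointwise control of $g$ in terms of $f$ is available; the whole argument hinges on the single fact that $w_n(t)$ is monotonically decreasing in $t$, which simultaneously lets one bound its supremum on $(t_1,t_2)$ by $w_n(t_1)$ and, via the support condition on $f$, trade the value $w_n(\tau_0)$ at an interior point $\tau_0<t_1$ for a uniform lower bound on $c_n^F(s)$. Note that implicit in the whole setup is that the interval in hypothesis b) is nondegenerate, i.e.\ $t_1>4$, which is precisely what allows the choice of $\tau_0$.
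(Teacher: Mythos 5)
Your proof is correct and follows exactly the route the paper indicates: it works directly from the coefficient representation (\ref{p5}), which is all the paper itself invokes before deferring the details to Appendix D of the cited reference. The three-way splitting of the integral, the monotonicity of the weight $\left(\frac{s-4}{2}+t\right)^{-n-1}$ in $t$, and the lower bound on $c_{n}^{F}(s)$ obtained from the support condition at $t=4$ are precisely the needed ingredients, and your treatment of the intermediate interval is sound.
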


The proof follows from the representation (\ref{p5}); see Appendix D of \cite%
{Kupsch:1971}.

\subsection{Unitarity of the Regge ansatz\label{unitRegge}}

For the constructions presented here the trajectories $\alpha (t)$ and the
residue functions $\beta (t)$ have the following properties, which are more
restrictive than those given in section \ref{R-inel}; for the general case
see \cite{Kupsch:1971}.

\begin{enumerate}
\item[a)] The trajectory $\alpha (t)$ is real analytic and satisfies the
dispersion relation (\ref{r5}). The imaginary part is H\"{o}lder continuous
and strictly positive $\mathrm{Im}\,\alpha (x+i0)>0$ for $x>4$. At threshold
the imaginary part is bounded by $\mathrm{Im}\,\alpha (t+i0)\leq c\sqrt{t-4}$
if $4\leq t\leq 5$ with $c>0$.

\item[b)] The values of $\alpha (t)$ are restricted by $\alpha (0)\leq 1$
and $\,0<\alpha (\infty )<\alpha (t)<2$ if $t\leq 4$.

\item[c)] The residue function $\beta (t)$ factorizes into%
\begin{equation}
\beta (t)=\beta _{1}(t)\cdot \beta _{0}(t)\;\mathrm{with}\;\beta
_{0}(t)=\int dt^{\prime }\sigma (t^{\prime })(t^{\prime }-t-1)^{-\gamma (t)}.
\label{r5a}
\end{equation}%
The convolution is performed with the same function $\sigma (t)$ as used in (%
\ref{r4}). The function $\beta _{1}(t)$ is real analytic with a cut at $%
t\geq 4$ and a H\"{o}lder continuous absorptive part. It has positive
partial wave amplitudes and $\beta _{1}(t)$ is bounded by $\left\vert \beta
_{1}(t)\right\vert \leq const\,\left( 1+\left\vert t\right\vert \right)
^{-\delta },\,t\in \mathbb{C}_{cut}$, with $\delta >\frac{1}{2}$.
\end{enumerate}

The restriction $\alpha (\infty )>0$ in b) allows only trajectories which
stay in the right half plane $\mathrm{Re}\,l>0$ below threshold $t=4$. This
assumption simplifies the subsequent arguments. The factor $\beta _{0}(t)$
in (\ref{r5a}) is an element of the class $\mathcal{A}_{2}(4)$, see Appendix %
\ref{PWpositive}. With $\beta _{1}(t)\in \mathcal{A}_{1}(4)$ the
residue function (\ref{r5a}) has positive partial wave coefficients,
as assumed in section \ref{R-inel}.

As a consequence of these assumptions the Regge ansatz (\ref{r4}) $\hat{R}%
(s,u)$ has the upper bound%
\begin{equation}
\left\vert \hat{R}(s,u)\right\vert \leq c\cdot \left( 1+\left\vert
t\right\vert \right) ^{-\theta }\left( 1+\left\vert s\right\vert \right)
^{\varpi (t)}\left( 1+\left\vert s-t\right\vert \right) ^{\varpi
(t)},\;(s,t)\in \mathbb{C}_{cut}^{2},  \label{r20}
\end{equation}%
with $\varpi (t)=\mathrm{Re}\,\gamma (t+i0)$ and an exponent $\theta
>2^{-1}\left( 1+\alpha (\infty )\right) $. For fixed $t$ the crossed
contributions $\hat{R}(s,t)+\hat{R}(t,u)$ decrease stronger than $s^{-\frac{1%
}{2}}$ if $s\rightarrow \infty $. The large $s$ asymptotics is therefore
dominated by $\hat{R}(s,u)$. Since $\alpha (\infty )>0$ the background
contribution $G(s,t)$ is chosen to satisfy a once subtracted Mandelstam
representation, see section \ref{crossing}.

The estimate (\ref{r20}) implies the uniform bound (\ref{r6}) for the
partial wave amplitudes. Since we are interested in trajectories with $%
\max_{t\geq 4}\mathrm{Re}\,\alpha (t+i0)>1$ the Froissart-Gribov integral
does not give good estimates for the $l$-dependence of the partial wave
amplitudes at high energies. But within a finite energy range, say $4\leq
s\leq 20$, we obtain from Lemma \ref{FG} of Appendix \ref{PW1}%
\begin{equation}
\left\vert f_{l}(s)\right\vert \leq c_{2}\sqrt{\frac{s}{s-4}}\Phi
_{l}(4;s),\;4\leq s\leq 20.  \label{r7}
\end{equation}

In the next step the linear unitarity relations (\ref{lin5}) of section \ref%
{lin} are derived for the Regge ansatz (\ref{r4}). If $s\geq 18$ (and $%
4-s\leq t\leq 0$) the imaginary part $N(s,t):=\mathrm{Im}\,R(s+i0,u)$ and
the real part $M(s,t):=\mathrm{Re}\,R(s+i0,u)$ are related by, see (\ref{r8}%
) and (\ref{r9}),
\begin{equation}
M(s,t)=-\cot \pi \gamma (t)\cdot N(s,t),  \label{r10}
\end{equation}%
and we have%
\begin{equation}
N(s,t)=0\quad \mathrm{if}\quad 4\leq s\leq 17.  \label{r10a}
\end{equation}%
As a consequence of property b) of the trajectory the function $\cot \pi
\gamma (t)$ is holomorphic for $t\in \mathbb{C}\backslash \left[ 4,\infty
\right) $, and the imaginary part has the upper bound
\begin{equation}
\left\vert \mathrm{Im}\cot \pi \gamma (t+i0)\right\vert \leq c\cdot \mathrm{%
Im}\,\gamma (t+i0)\;\mathrm{if}\;t>4  \label{r11}
\end{equation}%
with some constant $c\geq 0$. The function $\beta (t)$ is now factorized
into (\ref{r5a}). Then (\ref{r8}) and (\ref{r9}) can be written as
\begin{equation}
\begin{array}{l}
N(s,t)=\beta _{1}(t)\cdot \int ds^{\prime }\sigma (s^{\prime })(s-s^{\prime
})_{+}^{\gamma (t)}\cdot N_{1}(s,t)\quad \mathrm{and} \\
M(s,t)=-\beta _{1}(t)\cdot \cot \pi \gamma (t)\cdot \int ds^{\prime }\sigma
(s^{\prime })(s-s^{\prime })_{+}^{\gamma (t)}\cdot N_{1}(s,t)%
\end{array}
\label{r13}
\end{equation}%
where $N_{1}(s,t)\in \mathcal{A}_{2}(4)$ is given by (\ref{p9}) with $\delta
=0$. The imaginary part $N(s,t)$ is an element of $\mathcal{A}_{2}(20)$.
Moreover, with the help of Lemma \ref{compare} we can find a function $\beta
_{1}(t)\in \mathcal{A}_{2}(4)$ with positive imaginary part such that%
\begin{equation}
-c\cdot \beta _{1}(t)\prec -\beta _{1}(t)\cot \pi \gamma (t)\prec c\cdot
\beta _{1}(t)  \label{r15}
\end{equation}%
holds with some constant $c>0$. Using (\ref{lin4}) these relations imply
that the partial wave amplitudes of $M(s,t)$ can be estimated by those of $%
N(s,t)\in \mathcal{A}(20)$%
\begin{equation}
-c\cdot N(s,t)\prec M(s,t)\prec c\cdot N(s,t)\;\mathrm{if}\;s\geq 20.
\label{r17}
\end{equation}

Following the arguments of section \ref{lin} the partial wave amplitudes of
the Regge ansatz $R(s,u)$ satisfy the quadratic unitarity inequalities (\ref%
{lin2}) for $s\geq 20$. For energies $4\leq s\leq 17$ we have $\mathrm{Im}%
\,f_{l}(s)=0$ and -- using (\ref{r7}) and (\ref{pw5}) -- we obtain the upper
bound $\left\vert f_{l}(s)\right\vert ^{2}\leq c\sqrt{\frac{s-4}{s}}\Phi
_{l}(20;s)$. These results imply the lower bounds
\begin{equation}
\mathrm{Im}\,f_{l}(s)-c\left\vert f_{l}(s)\right\vert ^{2}\geq \left\{
\begin{array}{l}
-c_{1}\cdot \Phi _{l}(20;s)\quad \mathrm{if}\;4\leq s\leq 17, \\
-c_{2}\cdot \Phi _{l}(4;s)\quad \mathrm{if}\;17\leq s\leq 20, \\
0\quad \mathrm{if}\;s\geq 20,%
\end{array}%
\right.  \label{r18}
\end{equation}%
with some constants $c_{1,2}>0$. The partial wave amplitudes of the crossed
term $R(s,t)$ are $(-1)^{l}f_{l}(s)\,,l=0,1,2,...$. Hence the partial wave
amplitudes of the sum $R(s,u)+R(s,t)$ are $2f_{l}(s)$ if $l$ is even, and $0$
if $l$ is odd. These partial wave amplitudes satisfy again an estimate of
the type (\ref{r18}).

\subsection{Crossing symmetry\label{crossing}}

Crossing symmetry, correct threshold behaviour and the inelastic unitarity
inequalities (\ref{int9}) for all energies $s\geq 4$ can be incorporated
with a method which has been developed in \cite{Kupsch:1971,Kupsch:1982}.
The main results can be summarized in the following Propositions.

\begin{proposition}
\label{drei}Let $F(s,t)=F(s,u)$ be an amplitude, which is symmetric in $t$
and $u$ and which has the properties 1. -- 3. and 6. of section \ref{not}.
Assume there exists an energy $s_{1}\geq 4$ such that $F(s,t)$ has the upper
bound%
\[
\left\vert F(s+i0,t)\right\vert \leq const\cdot \sum_{j=1,2}s^{\beta
_{j}}\left( 1+\left\vert t\right\vert \right) ^{\alpha _{j}}
\]%
for $s\geq s_{1}\geq 4$ with $\alpha _{j}>-1$ and $\alpha _{j}+\beta _{j}<0$
(or $\alpha _{j}+\beta _{j}<-1$),$\;j=1,2$. Then one can find a constant $%
\lambda >0$ and a crossing symmetric amplitude $G(s,t)$, which satisfies a
Mandelstam representation with at most one (without) subtraction and with
positive spectral functions, such that the following statement is true:%
\newline
The sum $A(s,t)=\lambda F(s,t)+G(s,t)$ fulfils the unitarity inequalities $%
\mathrm{Im}\,a_{l}(s)\geq \left\vert a_{l}(s)\right\vert ^{2},\;l=0,1,2,...$%
for all energies $s\geq 4$.
\end{proposition}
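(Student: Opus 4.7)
The strategy is to scale $F$ by a small $\lambda>0$ and add a positive-spectral-function background $G$ whose quadratic deficit $\mathrm{Im}\,g_l-2|g_l|^2$ is strictly positive and uniformly dominates the (possibly negative) quadratic deficit of $\lambda f_l$. First, I apply Lemma~\ref{zwei} to $F$ with $\gamma:=\max_{j=1,2}(\alpha_j+\beta_j)<0$, obtaining
\[
|f_l(s)|\leq c_1\sqrt{\frac{s}{s-4}}\,s^{\gamma}\,\Phi_l^\mu(t_1;s)\quad\mathrm{and}\quad\mathrm{Im}\,f_l(s)-2|f_l(s)|^2\geq -c_2\, s^{\gamma}\Phi_l^\mu(t_1;s)
\]
for every $l\geq 0$ and $s\geq 4$; property~6 of section~\ref{not} supplies the correct threshold factor in the absorptive part, while H\"older continuity (property~2) makes Lemma~\ref{FG} applicable for large $l$.

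For the background, take $G=Sym\,\Phi_1[\varphi,\psi]$ in the subtracted case $-1<\gamma<0$, or $G=Sym\,\Phi_0[\psi]$ in the unsubtracted case $\gamma<-1$. Choose real parameters $\alpha,\delta$ with $\gamma<\alpha-1<0$ and $\delta>1$: concretely $\alpha\in(\max\{0,1+\gamma\},1]$ in the first case and $\alpha\in(1+\gamma,0]$ in the second. The strict inequality $\alpha-1>\gamma$ is available precisely because the hypothesis guarantees $\gamma<0$. Now pick positive H\"older continuous $\varphi,\psi$ satisfying the upper bound (\ref{pw21}) with these $\alpha,\delta$, together with the near-boundary lower bound (\ref{pw24}) at some $w\in[4,20]$. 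After rescaling $\varphi,\psi$ by a small constant if necessary, estimate (\ref{pw31}) of Appendix~\ref{PW2} gives the crucial lower bound
\[
\mathrm{Im}\,g_l(s)-2|g_l(s)|^2\geq c_0\left(\frac{s-4}{s}\right)^{\mu-\frac{1}{2}}s^{\alpha-1}(\log s)^{-\delta}\Phi_l^\mu(w;s)
\]
uniformly in $l\geq 0$ and $s\geq 4$; crossing symmetry of $G$ is automatic from the symmetrizer $Sym$.

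Combining, for $a_l=\lambda f_l+g_l$ the elementary bound $|a_l|^2\leq 2\lambda^2|f_l|^2+2|g_l|^2$ and the observation $\mathrm{Im}\,f_l-2\lambda|f_l|^2\geq\mathrm{Im}\,f_l-2|f_l|^2$ (valid for $0<\lambda\leq 1$) yield
\[
\mathrm{Im}\,a_l-|a_l|^2\geq c_0\left(\frac{s-4}{s}\right)^{\mu-\frac{1}{2}}s^{\alpha-1}(\log s)^{-\delta}\Phi_l^\mu(w;s)-\lambda c_2\, s^{\gamma}\Phi_l^\mu(t_1;s).
\]
By (\ref{pw2a}) the two $\Phi$-functions are comparable up to a constant independent of $l$ and $s$. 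Since $\alpha-1-\gamma>0$, the function $s^{\alpha-1-\gamma}(\log s)^{-\delta}$ is bounded below by a strictly positive constant on $[4,\infty)$, and the threshold factor $((s-4)/s)^{\mu-1/2}$ is $\geq 1$ there (and diverges as $s\to 4^{+}$). Picking $\lambda>0$ sufficiently small then forces the right-hand side to be non-negative uniformly in $l$ and $s$, which is exactly the desired inequality.

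The hard part is Step~2: producing a strictly positive, $l$-uniform lower bound on $\mathrm{Im}\,g_l-2|g_l|^2$. This relies not merely on positivity of the double spectral function of $G$ but on its sharp boundary behaviour (\ref{pw24}), because the Froissart-Gribov integral then generates exactly the $\Phi_l^\mu(w;s)$ profile needed to match, $l$-by-$l$, the upper bound on $|g_l|^2$ produced via (\ref{pw6a}). Without this matching the cancellation would fail at high angular momenta, and no choice of $\lambda$ would save the inequality.
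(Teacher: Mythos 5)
Your proposal is correct and follows essentially the same route as the paper's own proof: Lemma~\ref{zwei} for the bound $\mathrm{Im}\,f_l-2|f_l|^2\geq -c_2 s^{\gamma}\Phi_l^{\mu}(t_1;s)$ with $\gamma=\max_j(\alpha_j+\beta_j)<0$, a background $G$ from Appendix~\ref{PW2} satisfying (\ref{pw31}) with $\gamma+1<\alpha<1$ (or $<0$ in the unsubtracted case), and the elementary quadratic estimate with a sufficiently small $\lambda\leq 1$. You merely spell out more explicitly the $s$-uniform domination of $s^{\gamma}$ by $s^{\alpha-1}(\log s)^{-\delta}$ and the comparison of the two $\Phi_l^{\mu}$ profiles, which the paper absorbs into the choice $\lambda\leq\min\{1,c_2^{-1}c_0\}$.
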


\begin{proof}
Following Lemma \ref{zwei} in Appendix \ref{PW1} the partial wave amplitudes
of $F(s,t)$ can be estimated by (\ref{pw11}) with $\gamma =\max_{j}\left\{
\alpha _{j}+\beta _{j}\right\} <0$ and a constant $c_{2}>0$. Let $G(s,t)$ be
a crossing symmetric amplitude, which has a Mandelstam representation with
at most one subtraction and with positive spectral functions (as considered
in Appendix \ref{PW2}). We assume that the inequalities (\ref{pw31}) for the
partial wave amplitudes $\mathrm{Im}\,g_{l}(s)-2\left\vert
g_{l}(s)\right\vert ^{2}\geq c_{0}\cdot \left( \frac{s-4}{s}\right) ^{\mu -%
\frac{1}{2}}s^{\alpha -1}(\log s)^{-\delta }\Phi _{l}^{\mu }(t_{1};s)$ with $%
\alpha $ in the interval $\gamma +1<\alpha <1$ are valid. Then the amplitude
$A=\lambda F+G$ with $0<\lambda \leq \min \left\{ 1,c_{2}^{-1}c_{0}\right\} $
has partial wave amplitudes which fulfil the constraints
\[
\begin{array}{c}
\mathrm{Im}\,(\lambda f_{l}+g_{l})-\left\vert \lambda f_{l}+g_{l}\right\vert
^{2}\geq \lambda \mathrm{Im}\,f_{l}-2\lambda ^{2}\left\vert f_{l}\right\vert
^{2}+\mathrm{Im}\,g_{l}-2\left\vert g_{l}\right\vert ^{2} \\
\geq \lambda \mathrm{Im}\,f_{l}-2\lambda \left\vert f_{l}\right\vert ^{2}+%
\mathrm{Im}\,g_{l}-2\left\vert g_{l}\right\vert ^{2}\geq 0,\;l=0,2,4,...%
\end{array}%
\]%
\ for $s\geq 4$. If $\gamma <-1$ then $\gamma +1<\alpha <0$ is possible, and
we can choose an amplitude $G$ which satisfies an unsubtracted Mandelstam
representation.
\end{proof}

\begin{proposition}
\label{vier}Let $F(s,t)=F(s,u)$ be an amplitude, which is symmetric in $t$
and $u$ and which has the properties 1. -- 3. and 6. of section \ref{not}.
Assume there exists an energy $s_{1}\geq 4$ and a constant $\lambda _{0}>0$
such that the partial wave amplitudes of $F$ satisfy the inelastic unitarity
constraints $\mathrm{Im}f_{l}(s)\geq \lambda _{0}\left\vert
f_{l}(s)\right\vert ^{2}$ for $s\geq s_{1}$. Then we can find a constant $%
\lambda >0$ and a crossing symmetric amplitude $G(s,t)$, which satisfies an
unsubtracted Mandelstam representation with positive double spectral
function, such that the sum $A(s,t)=\lambda F(s,t)+G(s,t)$ fulfils the
unitarity inequalities $\mathrm{Im}a_{l}(s)\geq \left\vert
a_{l}(s)\right\vert ^{2},\;l=0,1,2,...$for all energies $s\geq 4$.
\end{proposition}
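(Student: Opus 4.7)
The plan is to parallel the argument of Proposition \ref{drei}, but to replace its high-energy decay hypothesis on $F$ by the quadratic inequality $\mathrm{Im}\, f_l(s) \geq \lambda_0 |f_l(s)|^2$ that is now assumed to hold on $[s_1,\infty)$. Thus $G$ only has to compensate the possibly-negative contribution of $\lambda F$ on the bounded interval $[4,s_1]$, where the hypothesis on $F$ says nothing. Everything rests on the elementary estimate $|\lambda f_l + g_l|^2 \leq 2\lambda^2|f_l|^2 + 2|g_l|^2$, which yields
\[ \mathrm{Im}(\lambda f_l + g_l) - |\lambda f_l + g_l|^2 \geq \lambda\bigl(\mathrm{Im}\, f_l - 2\lambda|f_l|^2\bigr) + \bigl(\mathrm{Im}\, g_l - 2|g_l|^2\bigr). \]

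On $[s_1,\infty)$ the hypothesis makes the first bracket non-negative as soon as $\lambda \leq \lambda_0/2$, and the second bracket is non-negative provided $G$ is chosen so that $\mathrm{Im}\, g_l \geq 2|g_l|^2$ for all $l$ and $s \geq 4$. On $[4,s_1]$ there is no quadratic control of $f_l$; however, properties 1.--3. and 6. of section \ref{not} together with Lemma \ref{eins} (applied with $t_1 = 4$, the location of the $t$- and $u$-channel cuts of $F$) yield the uniform lower bound $\mathrm{Im}\, f_l(s) - 2|f_l(s)|^2 \geq -c_3 \Phi_l^\mu(4;s)$, the factor $s^\gamma$ appearing in (\ref{pw8c}) being absorbed into the constant on this finite interval. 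Hence for $\lambda \leq 1$ the first bracket is bounded below by $-\lambda c_3 \Phi_l^\mu(4;s)$, and it remains to arrange that $G$ dominates this negative contribution with the very same $\Phi_l^\mu(4;s)$ weight.

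The background is taken as $G = \kappa \cdot \mathrm{Sym}\, \Phi_0[\omega]$ with $\omega$ a positive H\"{o}lder-continuous double spectral function supported in $[4,\infty) \times [4,\infty)$, having the bounds (\ref{pw21}) with exponent $\alpha \in (-1,0)$ (so that no subtraction is needed) and $\delta > 1$, and the corner positivity (\ref{pw24}) realized at $v = w = 4$. After scaling by a sufficiently small $\kappa > 0$, the estimate (\ref{pw31}) of Appendix \ref{PW2} yields
\[ \mathrm{Im}\, g_l(s) - 2|g_l(s)|^2 \geq c_0 \left(\frac{s-4}{s}\right)^{\mu - 1/2} s^{\alpha - 1} (\log s)^{-\delta} \Phi_l^\mu(4;s), \quad s \geq 4, \]
which is non-negative throughout and, on the bounded interval $[4,s_1]$, is bounded below by $c_0' \Phi_l^\mu(4;s)$ for some $c_0' > 0$, because the prefactor has a strictly positive minimum there. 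Choosing finally $\lambda \leq \min(1,\lambda_0/2,c_0'/c_3)$ combines both regions into the desired inequality $\mathrm{Im}(\lambda f_l + g_l) \geq |\lambda f_l + g_l|^2$ for all $l$ and all $s \geq 4$.

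The main technical point is the matching of weight functions between the two sides: the upper bound on the ``negative'' part from $F$ carries $\Phi_l^\mu(4;s)$ because property 1.\ of section \ref{not} fixes the $t$- and $u$-channel cuts of $F$ to start at $t = 4$, while the compensating lower bound from $G$ in (\ref{pw31}) carries $\Phi_l^\mu(w;s)$, with $w$ the lower edge of the $t$-support of $\omega$. This forces $w = 4$, so $\omega$ must extend down to the $t$-channel threshold. Arranging this while simultaneously keeping $\omega$ positive, $\mathrm{Sym}\, \Phi_0[\omega]$ crossing symmetric, and the Mandelstam representation unsubtracted (which restricts $\alpha$ to $(-1,0)$ and so forbids polynomial growth of $\omega$ in either variable) is exactly what the positive-spectral-function construction of Appendix \ref{PW2} is designed to deliver.
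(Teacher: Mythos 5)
Your proof is correct and follows essentially the same route as the paper's: Lemma \ref{eins} controls $\mathrm{Im}\,f_l-2|f_l|^2$ on the finite interval $[4,s_1]$ with the weight $\Phi_l^\mu(\cdot\,;s)$, the hypothesis handles $s\geq s_1$, and the elementary inequality $|\lambda f_l+g_l|^2\leq 2\lambda^2|f_l|^2+2|g_l|^2$ reduces everything to choosing an unsubtracted positive-spectral-function background $G$ satisfying (\ref{pw31}) and a small enough $\lambda$, exactly as in the paper's appeal to the proof of Proposition \ref{drei}. Your explicit two-region bookkeeping and the remark about matching the thresholds in $\Phi_l^\mu(4;s)$ versus $\Phi_l^\mu(w;s)$ are just a more detailed writeup of the same argument.
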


\begin{proof}
The amplitude $F(s,t)$ is polynomially bounded and has a threshold behaviour
(\ref{int10}). The partial wave amplitudes satisfy the estimates of Lemma %
\ref{eins} for $s\leq s_{1}$, hence%
\begin{eqnarray*}
\mathrm{Im}\,f_{l}-\lambda _{0}\left\vert f_{l}\right\vert ^{2} &\geq
&\left\{
\begin{array}{l}
-c\cdot \Phi _{l}^{\mu }(t_{1};s)\quad \mathrm{if}\;4\leq s\leq s_{1} \\
0\quad \mathrm{if}\;s\geq s_{1}%
\end{array}%
\right. \\
&\geq &-c_{\alpha }\cdot s^{\alpha -1}\Phi _{l}^{\mu }(t_{1};s)\quad \mathrm{%
if}\;s\geq 4,
\end{eqnarray*}%
where any constant $\alpha <0$ is possible. Using the arguments of the proof
for Proposition \ref{drei} we can find a constant $\lambda >0$ and an
amplitude $G(s,t)$, which is given by an unsubtracted Mandelstam
representation, such that the partial wave amplitudes $a_{l}=\lambda
f_{l}+g_{l}$ satisfy the unitarity inequalities for all $s\geq 4$.
\end{proof}

We now apply these Propositions to the Regge amplitudes of section \ref%
{Regge}. The $t-u$ symmetrized Regge contribution $F(s,t):=\hat{R}(s,u)+\hat{%
R}(s,t)$ satisfies the assumptions of Proposition \ref{vier}. Hence we can
find a constant $\lambda _{1}>0$ and a crossing symmetric unsubtracted
Mandelstam integral $G_{2}(s,t)$ with positive double spectral function such
that%
\begin{equation}
A_{1}(s,t)=\lambda _{1}\left( \hat{R}(s,u)+\hat{R}(s,t)\right) +G_{1}(s,t)
\label{r21}
\end{equation}%
has partial wave amplitudes which satisfy the unitarity inequalities (\ref%
{int9}) for $s\geq 4$.

The crossed Regge term $\hat{R}(t,u)$ satisfies the assumptions of
Proposition \ref{drei}, see the bound (\ref{r20}) for $\hat{R}(s,u)$. Hence
we can find a constant $\lambda _{2}>0$ and a crossing symmetric amplitude $%
G_{2}(s,t)$ as indicated in Proposition \ref{drei} such that

\begin{equation}
A_{2}(s,t)=\lambda _{2}\hat{R}(t,u)+G_{2}(s,t)  \label{r22}
\end{equation}%
has partial wave amplitudes, which satisfy the unitarity inequalities (\ref%
{int9}) for $s\geq 4$.

To derive unitarity for the sum (\ref{r3}) the statement of Remark \ref%
{convex} about the inelastic inequalities is essential. Starting from (\ref%
{r21}) and (\ref{r22}) with constants $\alpha _{1,2}\geq 0$ such that $%
\alpha _{1}+\alpha _{2}\leq 1$ and $\alpha _{1}\lambda _{1}=\alpha
_{2}\lambda _{2}=c>0$ we obtain%
\begin{equation}
A(s,t)=\alpha _{1}A_{1}(s,t)+\alpha _{2}A_{2}(s,t)=c\left( \hat{R}(s,u)+\hat{%
R}(s,t)+\hat{R}(t,u)\right) +G(s,t),  \label{r23}
\end{equation}%
is an amplitude which satisfies the properties 1.-3. and 5. of section \ref%
{not}. Thereby $G(s,t)=\alpha _{1}G_{1}(s,t)+\alpha _{2}G_{2}(s,t)$ is given
by a Mandelstam representation without (if $\alpha (\infty )<0$) or with one
subtraction (if $\alpha (\infty )>0$ as assumed in Appendix \ref{unitRegge}%
). The constant $c$ can be absorbed into the residue function $\beta (t)$
and (\ref{r23}) yields the representation (\ref{r3}).

\section{Khuri poles\label{Khuri}}

\subsection{Mellin transformation\label{Mellin}}

Let $f(t)$ be a complex function with support in $\mathbb{R}_{+}$ such that $%
\int_{0}^{\infty }\left\vert f(t)t^{-\gamma }\right\vert ^{2}t^{-1}dt<\infty
$ exits for some $\gamma \in \mathbb{R}$. Then the Mellin transformation
\cite{Titchmarsh:1948}
\begin{equation}
a(\nu )=\frac{1}{\pi }\int_{0}^{\infty }f(t)t^{-\nu -1}dt=\mathcal{M}\left[
f(t)\right] (\nu )  \label{m1}
\end{equation}%
is defined at least for $\mathrm{Re}\nu =\gamma $ with $\int_{-\infty
}^{\infty }\left\vert a(\gamma +ix)\right\vert ^{2}dx=\frac{2}{\pi }%
\int_{0}^{\infty }\left\vert f(t)t^{-\gamma }\right\vert ^{2}t^{-1}dt$. The
inverse Mellin transformation is given by \cite{Titchmarsh:1948}
\begin{equation}
f(t)=\mathcal{M}^{-1}\left[ a(\nu )\right] (t):=\frac{1}{2i}\int_{\gamma
}a(\nu )t^{\nu }d\nu .  \label{m2}
\end{equation}%
The symbol $\int_{\gamma }d\nu $ means integration along the line $\nu
=\gamma +ix,\,-\infty <x<\infty $.

Let $\mathcal{L}_{\gamma },\,\gamma \in \mathbb{R}$, be the Hilbert space of
all functions $f(t)$ with a finite norm (\ref{el-7}), then the Mellin
transform maps this space isometrically onto the Sobolev space $\mathcal{S}%
(\gamma )$ of functions $a(\nu )$ with norm
\begin{equation}
\mid a(\nu )\mid _{\gamma }=\left[ \frac{\pi }{2}\int_{\mathbb{R}}dx\left(
\left\vert a(\gamma +ix)\right\vert ^{2}+\left\vert \frac{d}{dx}a(\gamma
+ix)\right\vert ^{2}\right) \right] ^{\frac{1}{2}}.  \label{m5}
\end{equation}%
If the support of $f(t)\in \mathcal{L}_{\gamma }$ lies inside $\left[
t_{0},\infty \right) $ with $t_{0}>0$, then the integral (\ref{m1}) exists
also for $\mathrm{Re}\,\nu >\gamma $ and defines a holomorphic function in
that region. Moreover we have $a(\nu )\in \mathcal{S}(\gamma ^{\prime })$
for all $\gamma ^{\prime }\geq \gamma $. If $f(t)\in \mathcal{L}_{\gamma }$
with $-1<\gamma <0$, then the dispersion integral $F(t)=\frac{1}{\pi }%
\int_{0}^{\infty }f(t^{\prime })(t^{\prime }-t)^{-1}dt^{\prime }$ exists,
and we can calculate the Mellin transformation of $F(-t)$
\begin{equation}
\varphi (\nu )=\mathcal{M}\left[ F(-t)\right] =\frac{1}{\pi }%
\int_{0}^{\infty }F(-t)t^{-\nu -1}dt  \label{m6}
\end{equation}%
with the result $\varphi (\nu )=-a(\nu )\left( \sin \pi \nu \right) ^{-1}$.
The inverse Mellin transform then yields the Khuri representation (\ref{rel4}%
) of the function $F(t)$%
\begin{equation}
F(t)=-\frac{1}{2i}\int_{\gamma }\frac{a(\nu )}{\sin \pi \nu }(-t)^{\nu }dv.
\label{m7}
\end{equation}%
If $a(\nu )\in \mathcal{S}(\gamma )$, where $\gamma $ is not an integer,
then the functions $\left( \sin \pi \nu \right) ^{-1}a(\nu )$ and \newline
$\left( \sin \pi \nu \right) ^{-1}\exp (\pm i\pi \nu )a(\nu )$ are also
elements of $\mathcal{S}(\gamma )$, and the mappings $a(\nu )\rightarrow
\left( \sin \pi \nu \right) ^{-1}a(\nu )$ and $a(\nu )\rightarrow \left(
\sin \pi \nu \right) ^{-1}\exp (\pm i\pi \nu )a(\nu )$ are continuous.

The following example is used in Appendix \ref{pole}. Let $t_{1}>0$ be a
positive number and $\alpha \in \mathbb{C}$ with $\mathrm{Re}\,\alpha >-1$.
Then the function $\mathbb{R}\ni t\rightarrow f(t)=\left( t^{\alpha
}-t_{1}^{\alpha +1}t^{-1}\right) \Theta (t-t_{1})$ is H\"{o}lder continuous,
and it is an element of $\mathcal{L}_{\gamma }$ for all $\gamma >\mathrm{Re}%
\alpha $. The Mellin transform $a(\nu )=\mathcal{M}\left[ f(t)\right] (\nu )$
is calculated%
\begin{equation}
a(\nu )=\frac{1}{\pi }\int_{t_{1}}^{\infty }\left( t^{\alpha }-t_{1}^{\alpha
+1}t^{-1}\right) t^{-\nu -1}dt=\frac{1}{\pi }t_{1}^{\alpha -\nu }\left(
\frac{1}{\nu -\alpha }-\frac{1}{\nu +1}\right) =\frac{1}{\pi }t_{1}^{\alpha
-\nu }\frac{\alpha +1}{(\nu -\alpha )(\nu +1)},  \label{m8}
\end{equation}%
if $\mathrm{Re}\,\nu >\mathrm{Re}\,\alpha $.

\subsection{The pole ansatz\label{pole}}

A Regge pole at $\nu =\alpha (s)$ with residue $\beta (s)$ leads to a series
of Khuri poles at positions $\nu =\alpha (s)-n,\,n=0,1,2,,...$ with residues
\begin{equation}
r_{n}(s)=\frac{(4-s)^{n}}{n!}\frac{(-\alpha )_{n}(-\alpha )_{n}}{(-2\alpha
)_{n}}\beta (s).  \label{k0}
\end{equation}%
A suitable Regge ansatz, which includes $N+1$ Khuri poles, is, see (\ref{m8}%
),
\begin{eqnarray}
a_{R}(s,\nu ) &=&a_{R}\left[ \alpha ,\beta \right] (s,\nu ):=\beta
\,t_{1}^{\alpha (s)-\nu }\left( \frac{1}{\nu -\alpha (s)}-\frac{1}{\nu +1}%
\right)  \nonumber \\
&&+\beta \,t_{1}^{\alpha (s)-\nu }\sum_{n=1}^{N}\frac{1}{n!}\left( \frac{4-s%
}{t_{1}}\right) ^{n}\frac{(-\alpha )_{n}(1+\nu )_{n}}{(1+n+2\nu )_{n}}\left(
\frac{1}{\nu -\alpha +n}-\frac{1}{\nu +1}\right) .  \label{k1}
\end{eqnarray}%
Thereby $t_{1}>16$ is a parameter. The function $\nu \rightarrow a_{R}(s,\nu
)$ is an element of the spaces $\mathcal{S}(\gamma )$ for all $\gamma >-1$
with $\gamma \notin \left\{ \mathrm{Re}\alpha (s)-n\mid n=0,1,...,N\right\} $%
. Moreover, its inverse Mellin transform is H\"{o}lder continuous in the
variable $t$.

Under the assumptions a) - c) of section \ref{R-el} about the trajectory $%
\alpha (s)$ it is sufficient to choose $N=2$ in (\ref{k1}). For large
energies, say $\left\vert s\right\vert >s_{2}$, we have $\mathrm{Re}\,\alpha
(s)<\gamma _{0},\,-1<\gamma _{0}<0$. Then the Regge amplitude is defined for
these energies by the integral (\ref{m7})
\begin{equation}
R(s,t)=-\frac{1}{2i}\int_{\gamma _{0}}\frac{a_{R}(s,\nu )}{\sin \pi \nu }%
(-t)^{\nu }dv  \label{k2}
\end{equation}%
with the $t$-channel absorptive part $R_{t}(s,t)=\mathcal{M}_{\gamma }^{-1}%
\left[ a_{R}(s,\nu )\right] $ if $t\geq 4$. We can shift the contour of
integration to $\mathrm{Re}\,\nu =\gamma _{1}$ and have to collect the
residues at $\nu =0,1,2$,%
\begin{equation}
R(s,t)=-\frac{1}{2i}\int_{\gamma _{1}}\frac{a_{R}(s,\nu )}{\sin \pi \nu }%
(-t)^{\nu }dv+a_{R}(s,0)+a_{R}(s,1)t+a_{R}(s,2)t^{2}.  \label{k3}
\end{equation}%
This formula is still correct when the Regge trajectory enters the region $%
\gamma <\mathrm{Re}\alpha (s+i0)<\frac{3}{2}$. Adding the crossed term $%
R(s,u)$ we obtain%
\begin{eqnarray}
R(s,t)+R(s,u) &=&-\frac{1}{2i}\int_{\gamma _{1}}\frac{a_{R}(s,\nu )}{\sin
\pi \nu }\left[ (-t)^{\nu }+(-u)^{\nu }\right] dv  \nonumber \\
&&+2a_{R}(s,0)+(4-s)a_{R}(s,1)+\left( t^{2}+u^{2}\right) a_{R}(s,2).
\label{k4}
\end{eqnarray}%
The sum $2a_{R}(s,0)+(4-s)a_{R}(s,1)+\left( t^{2}+u^{2}\right) a_{R}(s,2)$
may lead to singularities if a Regge pole crosses the integer values $\alpha
=0,1,2$. If $\alpha =0$ such a singularity appears in $a_{R}(s,0)$ unless
the residue vanishes. To compensate the pole at $\alpha =0$ the residue
function $\beta (s)$ has to include a ghost killing factor $\alpha (s)$ as
done in \cite{Kupsch:1977}. If $\alpha =1$ the Khuri pole at $\nu =1$ and
its daughter pole at $\nu =0$ compensate. This a kinematic pole killing due
to the projection onto even partial waves in (\ref{k4}). The case $\alpha =2$
is excluded by the assumptions on the trajectory function.

\begin{remark}
If a Regge pole enters the strip $\gamma _{0}+1<\mathrm{Re}\alpha
(s+i0)<\gamma _{0}+2$ an alternative formula for the Regge amplitude (\ref%
{k3}) is%
\begin{equation}
R(s,t)=-\frac{1}{2i}\int_{\gamma _{0}}\frac{a_{R}(s,\nu )}{\sin \pi \nu }%
(-t)^{\nu }dv+\frac{\pi r_{0}(s)}{\sin \pi \alpha (s)}(-t)^{\alpha (s)}-%
\frac{\pi r_{1}(s)}{\sin \pi \alpha (s)}(-t)^{\alpha (s)-1}.  \label{k5}
\end{equation}
\end{remark}

\noindent The Mellin transform of $\mathrm{Abs}_{t}A(s,t)$ is then given by%
\begin{equation}
a(s,\nu )=a_{R}\left[ \alpha ,\beta \right] (s,\nu )+b(s,\nu )  \label{k6}
\end{equation}%
where $b(s,\nu )$ is a holomorphic background. The crossed terms $%
R(t,s)+R(t,u)$ and $R(u,s)+R(u,t)$ do only contribute to the background $%
b(s,\nu )$ if the residue function $\beta (s)$ decreases at least like $%
\left\vert s\right\vert ^{-N-1}$.

\subsection{The unitarity integral\label{unitInt}}

If $A_{t}(s,.)\in \mathcal{L}_{\gamma _{0}}$ with $-\frac{1}{2}+\mu <\gamma
_{0}<0$ the Mellin transform $a(s,\nu )=\mathcal{M}\left[ A_{t}(s,t)\right] $
is holomorphic in $\nu $ for $\mathrm{Re}\,\nu >\gamma _{0}$, and it is an
element of $\cap _{\gamma \geq \gamma _{0}}\mathcal{S}(\gamma )$. For such
amplitudes the unitarity integral (\ref{el-11}) is transformed into \cite%
{Kupsch:1977}
\begin{equation}
w(s,\nu )=\sqrt{\frac{s-4}{s}}(s-4)^{\nu }I\left( \gamma _{0},\gamma
_{0};s,\nu \right) ,\,4\leq s\leq 16.  \label{ru1}
\end{equation}%
Thereby $w(s,\nu )=\mathcal{M}\left[ \psi (s,t)\right] $ is the Mellin
transform of the double spectral function, and $I\left( \gamma ,\gamma
^{\prime };s,\nu \right) $ is the Mellin-Barnes type integral%
\begin{equation}
I\left( \gamma ,\gamma ^{\prime };s,\nu \right) :=-\frac{1}{4\pi ^{2}}%
\int_{\gamma }d\xi \int_{\gamma ^{\prime }}d\eta ~(s-4)^{\xi +\eta -2\nu
}M(s,\nu ,\xi ,\eta )a(s+i0,\xi )a(s-i0,\eta ).  \label{ru2}
\end{equation}%
with the kernel
\begin{equation}
M(s,\nu ,\xi ,\eta )=B(1+\xi ,\nu -\xi )B(1+\eta ,\nu -\eta )B(1+\nu ,1-\nu
+\xi +\eta )  \label{ru3}
\end{equation}%
The function $B(x,y)=\Gamma (x)\Gamma (y)\left( \Gamma (x+y)\right) ^{-1}$
is the Euler beta function. The integral (\ref{ru4}) is defined with
integration along the lines $\mathrm{Re}\,\xi =\gamma $ and \textrm{Re}%
\thinspace $\eta =\gamma ^{\prime }$ such that \newline
$\gamma ,\gamma ^{\prime }<\mathrm{Re}\,\nu <1+\gamma +\gamma ^{\prime }$.
If $a(s,\nu )$ is holomorphic for $\mathrm{Re}\,\nu >\gamma \geq \gamma _{0}$%
, the unitarity identity (\ref{ru1}) has an analytic continuation to
\begin{equation}
w(s,\nu )=\sqrt{\frac{s-4}{s}}(s-4)^{\nu }I\left( \gamma ,\gamma ;s,\nu
\right) ,\,4\leq s\leq 16,  \label{ru4}
\end{equation}%
which is valid within the strip $\gamma <\mathrm{Re}\,\nu <1+2\gamma $. Take
$\nu $ in the strip \newline
$\gamma <\mathrm{Re}\,\nu <\min \left\{ \gamma +1,1+2\gamma \right\} $. By a
shift of the contour of integration to the right to $\gamma +1$ (such that $%
\gamma <\mathrm{Re}\,\nu <\gamma +1$) we obtain%
\begin{eqnarray}
&&\sqrt{\frac{s}{s-4}}(s-4)^{-\nu }w(s,\nu )=B(1+\nu ,1+\nu )a(s+i0,\nu
)a(s-i0,\nu )  \nonumber \\
&&\hspace{1.0in}+a(s+i0,\nu )\varphi _{1}(s-i0,\nu )+\varphi _{1}(s+i0,\nu
)a(s-i0,\nu )  \nonumber \\
&&\hspace{1.0in}+I\left( \gamma +1,\gamma +1;s,\nu \right)  \label{ru5}
\end{eqnarray}%
with%
\[
\varphi _{1}(s\pm i0,\nu )=\frac{1}{2\pi i}\int_{\gamma +1}d\xi \,(s-4)^{\xi
-\nu }B(1+\xi ,\nu -\xi )\,B(1+\nu ,1+\xi )\,a(s\pm i0,\xi ).
\]%
For $4\leq s\leq 16$ we have $w(s,\nu )=(2i)^{-1}\left( a(s+i0,\nu
)-a(s-i0,\nu )\right) $. Assume that a pole $\beta (s+i0)\left( v-\alpha
(s+i0)\right) ^{-1}$ of $a(s+i0,\nu )$ enters the strip $\gamma <\mathrm{Re}%
\,\nu <\gamma +1$, then at the residue the following identity follows from (%
\ref{ru5})
\begin{equation}
2i\sqrt{\frac{s-4}{s}}(s-4)^{\alpha (s+i0)}\varphi (s-i0,\alpha (s+i0))=1,
\label{ru6}
\end{equation}%
where%
\begin{equation}
\varphi (s\pm i0,\nu )=B(1+\nu ,1+\nu )a(s\pm i0,\nu )+\varphi _{1}(s\pm
i0,\nu )  \label{ru7}
\end{equation}%
is (up to a factor $2$) the reduced partial wave amplitude of $A(s\pm i0,t)$%
. If $\left\vert v-\alpha (s-i0)\right\vert <1$ we have%
\begin{equation}
\varphi (s-i0,\nu )=\beta (s-i0)B(1+\nu ,1+\nu )\left( v-\alpha
(s-i0)\right) ^{-1}+\phi (s-i0,\nu ),  \label{ru8}
\end{equation}%
with a holomorphic \textquotedblleft background\textquotedblright\ $\phi
(s-i0,\nu )$, which originates from the daughter pole contributions, the
background of $a(s-i0,\nu )$ and from $\varphi _{1}(s-i0,\nu )$. Since $%
\mathrm{Im}\,\alpha $ is small, the number $\alpha (s+i0)$ lies in the
neighbourhood of $\alpha (s-i0)$, and we can insert (\ref{ru8}) into (\ref%
{ru6}). The resulting identity
\begin{equation}
\beta (s-i0)=(s-4)^{\sigma -\alpha (s+i0)}B^{-1}\chi (s)-2i\sqrt{\frac{s-4}{s%
}}(s-4)^{\sigma }\chi (s)B^{-1}\phi (s-i0,\alpha )  \label{ru9}
\end{equation}%
with $B^{-1}=B^{-1}(1+\alpha ,1+\alpha )$ has exactly the form (\ref{rel11}%
), only the interpretation of $b(s,\nu )$ has changed. The function $\mathbb{%
R}\ni s\rightarrow B^{-1}(1+\alpha (s+i0),1+\alpha (s+i0))\phi (s-i0,\alpha
(s+i0))$ is H\"{o}lder continuous.

\newpage

\end{document}